\newtheorem{theorem}{Theorem}
\newtheorem{lemma}{Lemma}
\newtheorem{corollary}{Corollary}
\newtheorem{problem}{Problem}
\newtheorem{definition}{Definition}
\newtheorem{assumption}{Assumption}
\newcommand{\eps}{\epsilon}
\renewcommand{\emptyset}{\varnothing}
\def\defeq{\stackrel{\mathrm{def}}{=}}
\def\prob#1#2{\mathbb{P}_{#1}\left[ #2 \right]}
\def\expec#1#2{{\mathbb{E}}_{#1}\left[ #2 \right]}
\newcommand{\calN}{\mathcal{N}}
\def\inN#1{\mathcal{N}^{\downarrow}_{#1}}
\def\sizeof#1{\left|#1  \right|}
\begin{document}

\title{A Sample-Based Algorithm for Approximately Testing $r$-Robustness of a Digraph}
\author{Yuhao Yi, Yuan Wang, Xingkang He, Stacy Patterson
, and Karl H. Johansson
\thanks{Y. Yi,
Y. Wang, and K. H. Johansson are with the Division of Decision and Control Systems, School of Electrical Engineering and Computer Science, KTH Royal Institute of Technology, and they are also affiliated with Digital Futures, SE-100 44 Stockholm, Sweden (email: yuhaoy@kth.se, yuanwang@kth.se, kallej@kth.se).
They are supported in part by Knut \& Alice Wallenberg foundation, and by Swedish Research Council.}
\thanks{X. He is with the Department of Electrical Engineering, University of Notre Dame, Notre Dame, IN, 46556, USA (email: xhe9@nd.edu).}
\thanks{S. Patterson is with the Department of Computer Science, Rensselaer Polytechnic Institute, Troy, NY, 12180, USA (email: sep@cs.rpi.edu).}
}

\maketitle

\begin{abstract}
One of the intensely studied concepts of network robustness is $r$-robustness, which is a network topology property quantified by an integer $r$. It is required by mean subsequence reduced (MSR) algorithms and their variants to achieve resilient consensus. However,  determining $r$-robustness is intractable for large networks. In this paper,
we propose a sample-based algorithm to approximately test $r$-robustness of a digraph with $n$ vertices and $m$ edges. 
For a digraph with a moderate assumption on the minimum in-degree, and an error parameter $0<\eps\leq 1$, the proposed algorithm distinguishes $(r+\eps n)$-robust graphs from graphs which are not $r$-robust with probability $(1-\delta)$.  
Our algorithm runs in $\exp(O((\ln{\frac{1}{\eps\delta}})/\eps^2))\cdot m$ time. 
%
%
%
 The running time is linear in the number of edges if $\eps$ is a constant.

\end{abstract}

\section{Introduction}
Consensus is the cornerstone of cooperative distributed systems as a mechanism to share information among agents. Due to its wide applications, the safety aspects of the problem have been considered intensively. One of the essential problems is to design consensus algorithms that tolerate a locally or globally bounded number of faulty agents or adversaries. In this context, consensus is achieved if the honest agents agree on a value which is justified by their initial values. These algorithms are also called resilient consensus algorithms. 

The history of distributed systems and multi-agent systems has witnessed the development of a whole spectrum of consensus algorithms against adversaries: the seminal paper~\cite{PSL80} and its explanatory version~\cite{LSP82} which study the binary consensus; the paper addressing incomplete networks~\cite{Dol82} for the binary case; more recent development on scalar consensus in incomplete networks~\cite{LZKS13,VTL12}; and multi-agent vector consensus~\cite{VG13,TV13,Vai14}. The applications of fault tolerant consensus algorithms are beyond enumeration. Examples of recent applications include distributed optimization~\cite{SV16,SG18}, rendezvous of robots~\cite{PH17}, hypothesis testing~\cite{SV19}, and distributed estimation~\cite{MS19,AY21}.

Some examples in \cite{LZKS13,ZFS15} show that network connectivity is insufficient for resilient consensus. Therefore the new notion of \emph{$r$-robustness} has been proposed~\cite{LZKS13} and used as sufficient conditions for many algorithms to achieve consensus among honest agents. For example, the W-MSR~\cite{LZKS13}, SW-MSR~\cite{SPSCK17}, and DP-MSR~\cite{DI17} algorithms. $r$-robustness imposes connectivity constraints on vertex set pairs of the network. Loosely speaking, for the MSR algorithms in an $r$-robust network, each honest vertex updates its state while ignoring at most $\lfloor(r-1)/2\rfloor$ smallest and largest values from its neighbors. Then, if each honest vertex has at most $\lfloor (r-1)/2\rfloor$ malicious in-neighbors, the asymptotic resilient consensus is achieved. 

Despite the fact that many resilient consensus algorithms require an $r$-robust network, 
determining if a network is $r$-robust or not has been proven to be {coNP}-complete~\cite{ZFS15}. Therefore, no polynomial time algorithm exists for the problem unless {P}={NP}. A known algorithm which solves the problem for arbitrary digraphs is proposed in~\cite{LK13}. It enumerates all subset pairs of the digraph, which has a running time exponential in the number of vertices $n$. Since then, efforts have been made to either improve the efficiency of the algorithm, or circumvent the problem. A notable work is the recent paper~\cite{UP20}, in which the problems are formulated as integer linear programs (ILP). ILP solvers are used to improve the speed of searching. The reformulation brings practical improvement but does not give provable improvement on complexity. To bypass the problem, network construction methods are investigated to grow a network with given $r$~\cite{LZKS13,GPK17,GSK20}. Estimation of $r$ is also studied for special classes of networks, such as random networks~\cite{ZFS15} and random interdependent networks~\cite{SPS17}. In these special networks, $r$ is bounded by spectral and structural properties of the network and can be efficiently estimated. However, we note that these 
estimations are not necessarily tight in arbitrary networks. To the best of our knowledge, no existing work has been done to rigorously study the approximation of $r$ in arbitrary digraphs.


The main contribution of this paper is an algorithm for approximately testing $r$-robustness with provable guarantees. 
By setting an error bound, we study the problem of distinguishing $(r+\eps n)$-robust networks from networks that are not $r$-robust. 
We devise a randomized (Monte Carlo) algorithm that solves the problem with probability $(1-\delta)$ for an error parameter $\eps>0$, and a digraph satisfying a moderate assumption about the minimum degree. 
We prove the performance guarantee of the proposed algorithm and show the tradeoff between precision and running time.

Our algorithm is based on random sampling of vertices and has a Enforce-and-Test flavor that is seen in graph property testing~\cite{GGR98,Ron10}.
Random sampling has been shown to be successful in property testing~\cite{GGR98} and the design of approximation algorithms~\cite{AKK99} for dense graphs. We extend the technique to approximately determining $r$-robustness in digraphs, which is a new type of problem compared with its traditional applications.



\subsubsection*{Outline} The remainder of the paper is structured as follows. In Section~\ref{sec:prel} we introduce some basic definitions. In Section~\ref{sec:formulate} we give the definition of the considered problem. In Section~\ref{sec:algo} we describe the proposed algorithm, which is analyzed in Section~\ref{sec:analy}. Some discussion is given in Section~\ref{sec:disc}, followed by the conclusion. Discussion about practical implementation and numerical examples are shown in the appendix.

\section{Preliminaries}
\label{sec:prel}
\subsection{Concepts and Notations}
A directed graph (digraph) $G$ is defined as a pair $(V,E)$, where $V$ and $E$ are the vertex set and the edge set. We let $\sizeof{V} = n$ and $\sizeof{E} = m$. We let $e = (u,v)$ be the directed edge from vertex $v$ to vertex $u$. We denote by $\inN{u}$ the set of in-neighbors of vertex $u$ and $\sizeof{\inN{u}}$ the in-degree of $u$. For a subset of vertices $V'\subset V$, we define the subgraph supported on $V'$ as $G[V']= (V', E')$ where $E' = \{(u,v)\in E : u,v\in V'\}$. 
Undirected graphs are viewed as bidirectional digraphs.

\begin{definition}[$r$-reachable set~\cite{LZKS13}]
Given a digraph $G=(V,E)$, a nonempty set $S\subset V$, an integer $r\geq 0$, $S$ is an $r$-reachable set if there exists a vertex $u\in S$ satisfying $\sizeof{\inN{u}\backslash S}\geq r$.
\end{definition}

\begin{definition}[$r$-robustness~\cite{LZKS13}]
A digraph is $r$-robust, if for every pair of nonempty, disjoint $A\subset V$ and $B\subset V$, at least one of $A$ and $B$ is $r$-reachable.
\end{definition}

\subsection{$r$-robust Graph and the Condition for Resilient Consensus}
We recall the condition for resilient consensus in a time-invariant synchronous network~\cite{LZKS13}. Each honest vertex in the network updates its value using a W-MSR algorithm. Each malicious vertex is allowed to send arbitrary but the same value to its out neighbors in each time step. A set of malicious vertices is said to be \emph{$F$-locally bounded} if any honest vertex in the network has at most $F$ malicious in-neighbors. $r$-robustness is the key to attain a guarantee for a consensus among honest vertices. It has been shown that in a synchronous system with a $\lfloor (r-1)/2 \rfloor$-locally bounded malicious set, the honest vertices in a $r$-robust network eventually agree on a value in the convex hull of their initial values. Then we say that the system facilitates resilient asymptotic consensus~\cite{LZKS13}. 
\section{Problem Formulation}
\label{sec:formulate}
In this section we formulate the problem that we consider. Recall that exactly determining the robustness of a graph is {coNP}-complete. In this paper we consider the following approximation problem 
to tradeoff precision for improvement in running time.




\begin{problem}
\label{prob:test}
Given a digraph $G=(V,E)$, two integers $r>0$, $0\leq \Delta \leq n$, find an algorithm which 1) certifies $r$-robustness if $G$ is $(r+\Delta)$-robust; 2) refutes $(r+\Delta)$-robustness if $G$ is NOT $r$-robust.
\end{problem}

If $\Delta=0$, Problem~\ref{prob:test} recovers the decision problem of determining whether or not a given digraph is $r$-robust~\cite{LK13,ZFS15,UP20}.



We let the algorithm output accept to certify $r$-robustness, and output reject to refute $(r+\Delta)$-robustness.
An algorithm solves Problem~\ref{prob:test} if the map between input and output satisfies Fig.~\ref{fig:map}. We note that the algorithm can output either accept or reject for instances that are $r$-robust but not $(r+\Delta)$-robust by definition of Problem~\ref{prob:test}. An (additive) approximation algorithm with parameters $r$ and $\Delta$ is not required to distinguish the instances in this category from instances in the other two categories. This is the limitation of the approximation approach.


\begin{figure}[htbp]
    \centering
    \includegraphics[width=.85\linewidth]{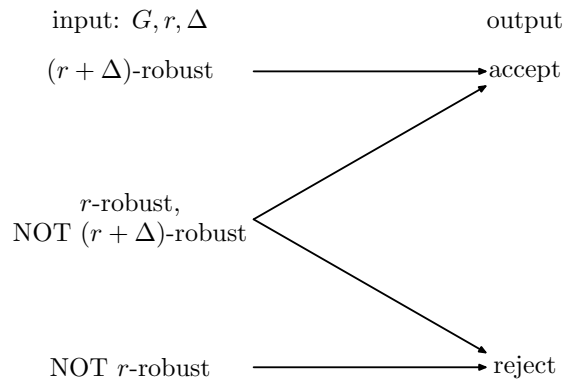}
    \caption{The map between input and output given by an algorithm that solves Problem~\ref{prob:test}.}
    \label{fig:map}
\end{figure}

An algorithm that solves Problem~\ref{prob:test} is a pessimistic testing algorithm for $r$-robustness in the sense that it produces no false answers but false negatives: it may reject $r$-robust networks but never accepts networks which are not $r$-robust. However it is not overly pessimistic since the rejected instances are not robust for a larger number $(r+\Delta)$. Alternatively, an algorithm that solves Problem~\ref{prob:test} can also be viewed as an optimistic testing algorithm for $(r+\Delta)$-robustness. Then the algorithm produces false positive answers but no false negative answers: it may accept networks that are not $(r+\Delta)$-robust but never rejects $(r+\Delta)$-robust networks. However it is not overly optimistic since the accepted instances are guaranteed to be robust for a smaller number $r$. 


\subsubsection*{A Monte Carlo Algorithm}
In this paper, we devise a Monte Carlo (MC) algorithm that behaves as follows:
 1) certifies $r$-robustness, if $G$ is $(r+\Delta)$-robust; 2) refutes $(r+\Delta)$-robustness \emph{with probability} at least $(1-\delta)$, if $G$ is not $r$-robust.
 
 In practice, we let $\delta = 1/3$. Then for a network that is not $r$-robust, it is rejected with probability at least $2/3$ in each independent run of the algorithm. We can amplify the probability of rejecting the instance to at least $(1-\sigma)$ for any $0<\sigma<1$ by running the algorithm $\lceil \frac{\ln{(1/\sigma)}}{\ln{3}}\rceil$ times\footnote{The probability that all runs fail is at most $(1/3)^{\lceil\frac{\ln{(1/\sigma)}}{\ln{3}}\rceil}\leq \sigma$. Therefore the instance is rejected in at least one run with probability at least $(1-\sigma)$.}. If the network is rejected in any one run, then it is rejected by the MC algorithm. On the other hand, networks that are $(r+\Delta)$-robust are always accepted by the MC algorithm.
\section{Algorithm}
\label{sec:algo}
In this section we introduce an assumption for the minimum in-degree of the graph and discuss the implication of the assumption. Then we describe the approximation algorithm and provide its performance guarantee.

\subsection{An Assumption for the Minimum In-Degree}
We make an assumption for the minimum in-degree of the digraph. We show that the condition in the assumption is easy to check and is indispensable for a network to tolerate a naive attack that searches for a vertex with small in-degree and tampers with the value of half of its neighbors. Results without the assumption are left for discussion in Section~\ref{sec:woinD}.
\begin{assumption}
\label{assum:dmin}
The minimum in-degree $d_{\min}$ of the digraph $G$, defined as $d_{\min}\defeq \min_{v\in V}\{\sizeof{\inN{v}}\}$, is greater than $2r+\Delta$.
\end{assumption}
By checking $\sizeof{\inN{v}}$ for all $v\in V$, an algorithm with running time\footnote{For two positive functions $f$ and $g$ of the variable $n$, we denote $f=O(g)$ if there exist constants $n_0>0$ and $c>0$, such that for all $n>n_0$,   $f\leq c \cdot g$. We denote $f=\Omega(g)$ if $g=O(f)$.} $O(m)$ determines whether or not Assumption~\ref{assum:dmin} holds.

The robustness of a network can be interpreted as the minimum cost that a computationally unconstrained attacker has to pay to drive the system to undesired states. If we also consider the complexity of the problem, a computationally efficient strategy with a slightly larger cost could be in favor of the attacker. We show that if Assumption~\ref{assum:dmin} does not hold, it only takes $O(m)$ running time for an attacker to find an attack strategy with a reasonable cost.
\begin{lemma}
\label{lemma:degree}
Given $G$, $r$, and $\eps$, if Assumption~\ref{assum:dmin} does not hold, there exists an $O(m)$-time algorithm $\rm{ExamDegree}$ which, by checking the degree of each vertex $u\in V$,  finds a partition\footnote{A $3$-tuple $(X,Y,Z)$ is said to be a partition of the graph $G=(V,E)$ if $\min\{\sizeof{X},\sizeof{Y}\}\geq 1$, $X\cap Y = \emptyset$, $X\cap Z = \emptyset$, $Y\cap Z = \emptyset$, and  $X\cup Y \cup Z=V$. We also denote $(X,Y,\emptyset)$ as $(X,Y)$.} $(\{v\},V\backslash \{v\})$ of $G$ such that $\{v\}$ and $V\backslash \{v\}$ are not $(2r+\Delta)$-reachable.
\end{lemma}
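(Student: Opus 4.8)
The plan is to let $\rm{ExamDegree}$ make a single pass over the adjacency lists of $G$, computing the in-degree $\sizeof{\inN{u}}$ of every $u\in V$ — this costs $O(m)$ time — and record a vertex $v$ of minimum in-degree, so that $\sizeof{\inN{v}} = d_{\min}$. Since we assume Assumption~\ref{assum:dmin} fails, $d_{\min}\le 2r+\Delta$ (indeed any single vertex attaining this bound, in particular the recorded $v$, will do). The algorithm then outputs the pair $(\{v\},V\backslash\{v\})$, which is a partition in the sense of the footnote whenever $n\ge 2$. What remains is to verify that neither block is $(2r+\Delta)$-reachable; each verification is a direct unfolding of the definition of an $r$-reachable set.

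For the block $V\backslash\{v\}$: take any candidate witness $u\in V\backslash\{v\}$. The only vertex of $V$ lying outside $V\backslash\{v\}$ is $v$ itself, so $\inN{u}\backslash(V\backslash\{v\})\subseteq\{v\}$ and hence $\sizeof{\inN{u}\backslash(V\backslash\{v\})}\le 1$. Because $r>0$ forces $2r+\Delta\ge 2>1$, no $u$ witnesses $(2r+\Delta)$-reachability, so $V\backslash\{v\}$ is not $(2r+\Delta)$-reachable. For the block $\{v\}$: the only candidate witness is $v$, and $\sizeof{\inN{v}\backslash\{v\}}\le\sizeof{\inN{v}} = d_{\min}\le 2r+\Delta$, so $\{v\}$ is not $(2r+\Delta+1)$-reachable; to land exactly on $(2r+\Delta)$ one also rules out the boundary case $\sizeof{\inN{v}} = 2r+\Delta$, which is where I would invoke the standing convention that a self-loop is present at $v$ (and so is discarded when forming $\inN{v}\backslash\{v\}$), giving $\sizeof{\inN{v}\backslash\{v\}}\le d_{\min}-1<2r+\Delta$. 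Combining the two verifications yields exactly the partition claimed, produced in $O(m)$ time.

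I expect the only real obstacle to be this bit of off-by-one/self-loop bookkeeping at the degree threshold; every other step is immediate from the definitions and from a single linear scan of the edge set. As a side remark, a partition into two blocks, neither of which is $(2r+\Delta)$-reachable, certifies that $G$ is not $(2r+\Delta)$-robust — so the lemma says that this certificate, which corresponds to corrupting the few in-neighbors of the low-degree vertex $v$ (the ``naive attack'' that Assumption~\ref{assum:dmin} is meant to preclude), is always computable in linear time once the assumption fails.
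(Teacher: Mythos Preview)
Your approach matches the paper's exactly: scan the adjacency lists in $O(m)$ time to find a minimum-in-degree vertex $v$ and output $(\{v\},V\setminus\{v\})$, then observe that both sides fail $(2r+\Delta)$-reachability. The paper's own proof is terser and simply asserts the reachability claims without addressing the boundary case $d_{\min}=2r+\Delta$; the ``self-loop convention'' you invoke to close that gap is not stated anywhere in the paper, so this off-by-one remains a small loose end in both your argument and the paper's.
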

The proof of Lemma~\ref{lemma:degree} is given in Appendix~\ref{appdxB}.

If Assumption~\ref{assum:dmin} is violated, then $\rm{ExamDegree}$ returns a vertex $u$ with minimum in-degree. For the W-MSR algorithm discussed in~\cite{LZKS13}, by attacking at least half of the in-neighbors of $u$, an attacker is able to prevent the honest vertex $u$ from reaching resilient asymptotic consensus. In particular, the vertex $u$ cannot remove all malicious messages without separating itself from all other honest vertices.

Given Lemma~\ref{lemma:degree}, we argue that to prevent a naive attack, it is necessary for the defender (or system designer) to ensure that Assumption~\ref{assum:dmin} is satisfied. In the remainder of the paper we will assume that Assumption~\ref{assum:dmin} holds. We revisit this issue and discuss arbitrary digraphs in Section~\ref{sec:woinD}.

\subsection{Reachability of Small Subsets}


Under Assumption~\ref{assum:dmin}, we propose a randomized algorithm to solve Problem~\ref{prob:test} with time complexity $\exp(\tilde{O}(1/\eps^2))m$. We will use the following concept of robustness in our analysis.
\begin{definition}[$\beta$-close $r$-robustness]
\label{close.def}
For a given $\beta \in [1/n,1]$, a digraph $G=(V,E)$ is $\beta$-close to $r$-robustness , denoted $r_\beta$-robustness, if for every pair of nonempty, disjoint $A\subset V$ and $B\subset V$ with $\min\{\sizeof{A},\sizeof{B}\}\geq \beta n$, at least one of $A$ and $B$ is $r$-reachable. 
\end{definition}
We note that the definition itself provides a weaker concept for network robustness. 

To simplify notations we let $\eps \defeq  \Delta/n$. 
If Assumption~\ref{assum:dmin} holds, we attain the following result:
\begin{lemma}
\label{lemma:smallReach}
For a graph $G$ that satisfies Assumption~\ref{assum:dmin}, the graph is $r$-robust if and only if it is $r_{\eps}$-robust, where $\eps \defeq \Delta/n$.
\end{lemma}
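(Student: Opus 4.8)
The plan is to prove the two implications separately, noting that only the converse uses Assumption~\ref{assum:dmin}. The forward direction is immediate: if $G$ is $r$-robust, then by definition every pair of nonempty, disjoint $A,B\subset V$ has at least one of $A,B$ being $r$-reachable; in particular this holds for all such pairs with $\min\{\sizeof{A},\sizeof{B}\}\geq \eps n=\Delta$, which is exactly the condition defining $r_{\eps}$-robustness (Definition~\ref{close.def} with $\beta=\eps$). No assumption on $d_{\min}$ is needed here.

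For the converse, I would assume $G$ is $r_{\eps}$-robust and Assumption~\ref{assum:dmin} holds, and then show that an arbitrary pair of nonempty, disjoint $A,B\subset V$ has at least one $r$-reachable set, splitting into two cases according to the size of the smaller set. If $\min\{\sizeof{A},\sizeof{B}\}\geq\Delta=\eps n$, then $r_{\eps}$-robustness applies directly and we are done. Otherwise one of the two sets, say $A$ without loss of generality, satisfies $\sizeof{A}<\Delta$. The key observation is that under Assumption~\ref{assum:dmin} such a small set is automatically $r$-reachable: picking any $u\in A$ (which exists since $A$ is nonempty), we have
\[
\sizeof{\inN{u}\backslash A}\ \geq\ \sizeof{\inN{u}}-\sizeof{A}\ \geq\ d_{\min}-\sizeof{A}\ >\ d_{\min}-\Delta\ >\ 2r\ \geq\ r,
\]
so the single vertex $u$ already witnesses that $A$ is $r$-reachable. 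Combining the two cases gives $r$-robustness.

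I do not anticipate a genuine obstacle: the substance of the lemma is simply that the minimum in-degree bound $d_{\min}>2r+\Delta$ is calibrated precisely so that every set of size less than $\Delta$ is $r$-reachable ``for free'', whence the only pairs that could possibly violate $r$-robustness are those with both sides of size at least $\Delta$ — and those are exactly the pairs that $r_{\eps}$-robustness controls. The only points requiring a little care are keeping the inequalities consistent (the strict $\sizeof{A}<\Delta$ versus the non-strict threshold $\geq \eps n$ in Definition~\ref{close.def}, and the strict $d_{\min}>2r+\Delta$ in Assumption~\ref{assum:dmin}), using $r>0$ in the final step $2r\geq r$, and the trivial bookkeeping that $A$ is nonempty so a witness vertex $u$ is available.
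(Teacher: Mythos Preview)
Your proof is correct and takes essentially the same approach as the paper: the forward direction is immediate from the definitions, and for the converse you use the minimum in-degree bound to show any set of size below $\eps n$ is automatically $r$-reachable, leaving only the case $\min\{\sizeof{A},\sizeof{B}\}\geq\eps n$ where $r_\eps$-robustness applies directly. Your case split is in fact slightly cleaner than the paper's wording, but the substance is identical.
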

The proof of Lemma~\ref{lemma:smallReach} is deferred to Appendix~\ref{appdxB}.

Lemma~\ref{lemma:smallReach} shows that under Assumption~\ref{assum:dmin}, sets with sizes less than $\eps n$ are always $r$-reachable. Therefore to solve Problem~\ref{prob:test}, it suffices to approximately test $r_\eps$-robustness of the network, by examining partitions $(A,B,C)$ in which $\min\{\sizeof{A}, \sizeof{B}\} \geq \eps n$. In particular, an algorithm solves Problem~\ref{prob:test} if it rejects instances which are NOT $r_\eps$-robust, and accepts instances which are $(r+\Delta)_\eps$-robust, where $\eps=\Delta/n$.

\subsection{Algorithm Outline}
Before diving into the details, we describe the overall process of the algorithm. Our algorithm is based on vertex sampling. For any fixed partition of the network, random sampling provides statistical information for the partition. If the network is not $r_{\eps}$-robust, we seek to construct a partition $(A,B,C)$ that violates $r$-robustness. We can only reconstruct it approximately. We prove that if there exists a partition $(A,B,C)$ in which $\min\{\sizeof{A},\sizeof{B}\}\geq \eps n$, and none of $A$ and $B$ is $r$-reachable, then the proposed algorithm finds a partition $(A', B', C')$ in which none of $A'$ and $B'$ is $(r+\eps n)$-reachable, with probability\footnote{The probability can be further amplified by repetition, as we have explained in Section~\ref{sec:formulate}.} at least $(1-\delta)$. 
The number of vertices that need to be sampled is a function of the parameters $\eps$ and $\delta$, but independent of $n$, if $\eps$ and $\delta$ are positive constants. The function will be specified later in the paper.

The algorithm first randomly samples a set $U$ of vertices from the graph. The size of $U$ should be sufficiently large~\footnote{A lower bound of $\sizeof{U}$ will be later given as a function of $\eps$ and $\delta$.} for estimating the number of vertices in $\inN{v}$ that are also in subsets $A\cup C$ and $B\cup C$. For a partition $(A,B,C)$ where $A$ and $B$ are not $r$-reachable, and $\min\{\sizeof{A},\sizeof{B}\}\geq \eps n$, then with high probability there exists a partition $\pi(U)= (U_A, U_B, U_C)$ of $U$, such that $U_A\subset A$. $U_B\subset B$, $U_C\subset C$. In addition, we can estimate $\sizeof{\inN{v}\cap(A\cup C)}$ and $\sizeof{\inN{v}\cap(B\cup C)}$ using $\sizeof{\inN{v}\cap(U_A\cup U_C)}$ and $\sizeof{\inN{v}\cap(U_B\cup U_C)}$ for all $v\in (V\backslash U)$. Then there are constraints based on the sizes of the intersections that help us assign the rest of the vertices to their corresponding subsets. Specifically, if a vertex $v$ is estimated to have a large number of in-neighbors in $(A\cup C)$, it is not likely that $v$ belongs to $B$; if a vertex $v$ is estimated to have a large number of in-neighbors in $B\cup C$, it is not likely that $v$ belongs to $A$. By utilizing these constraints we attain a partition $(A',B',C')$. Let $\Gamma_{A'}$ (resp. $\Gamma_{B'}$) be the set of vertices in $A'$ (resp. $B'$) such that each vertex in $\Gamma_{A'}$ (resp. $\Gamma_{B'}$) has the number of in-neighbors from $B'\cup C'$ (resp. $A'\cup C'$) greater or equal to a threshold of $r+O(\eps n)$. The attained partition $(A',B',C')$ is constructed such that $\sizeof{\Gamma_{A'}}+\sizeof{\Gamma_{B'}}$ is $O(\eps n)$ with probability $(1-\delta)$. 
Then we run one pass of updates to correct the assignments of the misclassified vertices. In this pass at most $O(\eps n)$ vertices are moved between $A'$, $B'$, and $C'$, therefore the pass does not change the number of neighbors of any vertices in $A'$, $B'$, or $C'$ by more than $O(\eps n)$. Then we attain a partition $(A',B',C')$ in which both $A'$ and $B'$ are not $r+O(\eps n)$ reachable.

Since we do not know which partition $\pi(U)$ of the sampled vertices $U$ corresponds to the partition $(A,B,C)$ that violates $r_\eps$-robustness, we simply try all partitions of the sampled vertices.

\subsection{The Sample-Based Algorithm}
The algorithm ${\rm{SampledRbstTst}}(G, \eps, \delta, r)$ outputs accept  if the network is $(r+\eps n)$-robust. It outputs reject with probability $(1-\delta)$ if the network is not $r$-robust. The algorithm samples a set $U$ of $t(\eps, \delta)$ vertices and examine all partitions of the set $U$. For each partition $(U_A, U_B, U_C)$ of $U$, the algorithm calls $3$ subroutines $\rm{Restrict}$, $\rm{Move}$, and $\rm{TestReach}$. The algorithm $\rm{Restrict}$ is a one-pass algorithm that takes all vertices $V\backslash U$ and add the vertices to the partition $(A',B',C')$. The algorithm $\rm{Move}$ is a one-pass algorithm that refines $(A',B',C')$ based on the partition returned by $\rm{Restrict}$. The algorithm $\rm{TestReach}$ checks if a refutation of $(r+\eps n)$-robustness is found.  

The key to the approximation is the $\rm{Restrict}$ subroutine. The algorithm takes as input the graph $G$ and all the parameters from $\rm{SampledRbstTst}$, and a partition $(U_A, U_B, U_C)$ of the sampled vertices $U$. $A'$, $B'$, and $C'$ are initialized to be equal to $U_A$, $U_B$, and $U_C$ respectively. Then the subroutine designates the rest of the vertices to one of $A'$, $B'$, and $C'$. We will analyze the algorithm in the next section.


\begin{algorithm2e}[t]
\SetAlgoLined
\caption{SampledRbstTst($G$, $\eps$, $\delta$, $r$)
}
\label{alg:testRbst}
\SetKwInOut{Input}{Input}\SetKwInOut{Output}{Output}
\Input{$G=(V,E)$, an error bound $\eps>0$, \\an error probability $\delta$, a parameter $r$}
\Output{TestRslt: accept (a certificate for $r$-robustness) or reject (a refutation of $(r+\eps n)$-robustness)}
$p\gets r/n$\;
Sample a set $U$ of size $t(\eps, \delta)$ uniformly at random;\\
TestRslt$\gets$accept\;
\For{ each $3$-partition $\pi(U)=(U_A, U_B, U_C)$ where $U_A\neq \emptyset$ and $U_B\neq \emptyset$}{
\# Algorithm~\ref{alg:Restrict}
$(A',B',C')\gets \text{Restrict}(G, U_A, U_B, U_C, p, \eps, t)$;
\\
\# Algorithm~\ref{alg:Move}
$(A',B',C')\gets \text{Move}(G, A', B', C', p , \eps)$;\\
\# Algorithm~\ref{alg:testReach}
$Rslt \gets \text{TestReach}(G, A', B', C', p, \eps)$;\\
\If{$Rslt=0$}{
TestRslt$\gets$reject\; $\textbf{Break}$\;
}
}
\end{algorithm2e}

\begin{algorithm2e}[t]
\SetAlgoLined
\caption{Restrict($G, U_A, U_B, U_C, p, \eps, t$)}
\label{alg:Restrict}
\SetKwInOut{Input}{Input}\SetKwInOut{Output}{Output}
\Input{$G, U_A, U_B, U_C, p, \eps, t$}
\Output{a partition of all vertices $\pi'(V)=(A',B',C')$}
$A'\gets U_A$, $B'\gets U_B$, $C'\gets U_C$;\\
\For{ each vertex $v\in V\backslash U$}{
\uIf{$\frac{\sizeof{\inN{v}\cap (U_A\cup U_C)}}{ t} > (p+\eps /4)  \And \frac{\sizeof{\inN{v}\cap (U_B\cup U_C)}}{ t} > (p+\eps/4)$}{
$C'\gets C'\cup \{v\}$;\\
}\uElseIf{$\frac{\sizeof{\inN{v}\cap (U_A\cup U_C)}}{ t} > (p+\eps /4) $}{
$A'\gets A' \cup\{v\}$;\\
}\uElseIf{$\frac{\sizeof{\inN{v}\cap (U_B\cup U_C)}}{ t} > (p+\eps /4) $}{
$B'\gets B' \cup\{v\}$;\\
}\uElse{
add $v$ to one of  $A'$, $B'$, or $C'$ arbitrarily\;
}
}
\end{algorithm2e}

\begin{algorithm2e}[t]
\SetAlgoLined
\caption{Move($G, A', B', C', p, \eps$)}
\label{alg:Move}
\SetKwInOut{Input}{Input}\SetKwInOut{Output}{Output}
\Input{$G, A', B', C', p, \eps$, as explained}
\Output{a partition of all vertices $\pi'(V)=(A',B',C')$}
$(A'',B'',C'')\gets(A',B',C')$\;
\For{ each vertex $v\in (V\backslash U)$}{
\uIf{$v\in A' \And \sizeof{\inN{v}\cap (B'\cup C')}> (pn+3\eps n/4)$}{
\uIf{$\sizeof{\inN{v}\cap (A'\cup C')}> (pn+3\eps n/4)$}{
$A''\gets A''\backslash\{v\}$, $C''\gets C''\cup \{v\}$\;
}\uElse{
$A''\gets A''\backslash\{v\}$, $B''\gets B''\cup \{v\}$
}
}\uElseIf{$v\in B' \And \sizeof{\inN{v}\cap (A'\cup C')}> (pn+3\eps n/4)$}{
\uIf{$\sizeof{\calN_v\cap (B' \cup C')}> (pn+3\eps n/4)$}{
$B''\gets B''\backslash\{v\}$, $C''\gets C''\cup \{v\}$\;
}\uElse{
$B''\gets B''\backslash\{v\}$, $A''\gets A''\cup \{v\}$
}
}
}
$(A',B',C')\gets(A'',B'',C'')$\;
\end{algorithm2e}

\begin{algorithm2e}[t]
\SetAlgoLined
\caption{TestReach($G, A', B', C', p, \eps$)}
\label{alg:testReach}
\SetKwInOut{Input}{Input}\SetKwInOut{Output}{Output}
\Input{$G, A', B', C', p, \eps$, as explained}
\Output{Rslt: $1$ if $A'$ or $B'$ are $(pn+\eps n)$-reachable, $0$ if both are not}
Rslt$\gets 0$\;
\For{ each vertex $v\in V$}{
\uIf{$v\in A' \And \sizeof{\inN{v}\cap (B'\cup C')}\geq (pn+\eps n)$}{
Rslt$\gets 1$\;
}\uElseIf{$v\in B' \And \sizeof{\inN{v}\cap (A'\cup C')}\geq (pn+\eps n)$}{
Rslt$\gets 1$\;
}
}
\end{algorithm2e}

\section{Algorithm Analysis}
\label{sec:analy}
We show a Monte Carlo algorithm that solves Problem~\ref{prob:test} with probability at least $(1-\delta)$ for any $0<\eps\leq 1$ and $\delta>0$ in $\exp(\tilde{O}(1/\eps^2))m$ running time\footnote{The notation $\tilde{O}(\cdot)$ hides factors of polynomials of $\ln{(1/(\eps\delta))}$.} . 
\begin{theorem}
\label{theorem:main}
Given a graph $G$, two integers $r>0$, $\eps \defeq \Delta/n$ $ (\eps\in(0,1])$, under Assumption~\ref{assum:dmin}, Algorithm~\ref{alg:testRbst}
\begin{enumerate}
\item certifies $r$-robustness if $G$ is $(r+\eps n)$-robust; 
\item refutes $(r+\eps n)$-robustness, with probability at least $(1-\delta)$,  if $G$ is NOT $r$-robust;
\item runs in $\exp(\tilde{O}(1/\eps^2))\cdot m$ time.
\end{enumerate}
%
\end{theorem}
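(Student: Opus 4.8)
\emph{Proof proposal.} I would establish conclusions (3) and (1) first — both are short — and then devote the bulk to (2).

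For (3): the outer loop of Algorithm~\ref{alg:testRbst} ranges over the $3^{t}$ labelings of the $t=t(\eps,\delta)$ sampled vertices, and, after recording the labels of $U$ in a lookup table of size $O(n)$, each call to $\mathrm{Restrict}$, $\mathrm{Move}$, $\mathrm{TestReach}$ merely scans the in-neighbourhoods $\inN{v}$ (whose sizes sum to $m\ge n$), so runs in $O(m)$ time; thus the total running time is $3^{t}\cdot O(m)$, and the analysis of (2) will force $t=\Theta(\eps^{-2}\ln(1/(\eps\delta)))$, which gives $3^{t}=\exp(\tilde O(1/\eps^{2}))$. For (1): whenever a loop iteration produces a partition $(A',B',C')$ for which $\mathrm{TestReach}$ returns $0$, the sets $A'\supseteq U_A$ and $B'\supseteq U_B$ are nonempty, disjoint, and — by the definition of $\mathrm{TestReach}$ — neither is $(r+\eps n)$-reachable, which certifies that $G$ is not $(r+\eps n)$-robust; hence an $(r+\eps n)$-robust graph never triggers a rejection and is always accepted, with no randomness involved.

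Now for (2). Suppose $G$ is not $r$-robust; by Lemma~\ref{lemma:smallReach} it is not $r_{\eps}$-robust, so I would fix, once and for all, a partition $(A,B,C)$ with $\min\{|A|,|B|\}\ge\eps n$ in which neither $A$ nor $B$ is $r$-reachable. Writing $p\defeq r/n$, $a(v)\defeq|\inN{v}\cap(A\cup C)|/n$, $b(v)\defeq|\inN{v}\cap(B\cup C)|/n$, non-reachability reads $b(v)<p$ for every $v\in A$ and $a(v)<p$ for every $v\in B$, while Assumption~\ref{assum:dmin} supplies the crucial identity $a(v)+b(v)=(|\inN{v}|+|\inN{v}\cap C|)/n\ge d_{\min}/n>2p+\eps$ for \emph{every} vertex $v$. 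Call $v\in V\setminus U$ \emph{bad} if, taking $U_A\defeq U\cap A$, $U_B\defeq U\cap B$, $U_C\defeq U\cap C$, the estimate $|\inN{v}\cap(U_A\cup U_C)|/t$ deviates from $a(v)$ by more than $\eps/8$, or the corresponding estimate deviates from $b(v)$ by more than $\eps/8$. A Hoeffding bound gives per-vertex bad-probability at most $4e^{-\eps^{2}t/32}$, hence $\E|\mathrm{Bad}|\le 4ne^{-\eps^{2}t/32}$; by Markov's inequality, for $t=\Theta(\eps^{-2}\ln(1/(\eps\delta)))$ we get, with probability $\ge 1-\delta$, that $U\cap A\ne\emptyset$, $U\cap B\ne\emptyset$, and $|\mathrm{Bad}|\le\eps n/8$ all hold simultaneously. (Bounding $|\mathrm{Bad}|$ through its expectation and Markov, rather than a union bound over all $n$ vertices, is exactly what keeps $t$ independent of $n$.) I then condition on this event and inspect the loop iteration that uses the labeling $(U_A,U_B,U_C)=(U\cap A,U\cap B,U\cap C)$, which is indeed visited since $U_A,U_B\ne\emptyset$.

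The structural core is that $\mathrm{Restrict}$ misclassifies only bad vertices: $A\setminus A'\subseteq\mathrm{Bad}$, $B\setminus B'\subseteq\mathrm{Bad}$, and $A'\setminus A\subseteq C\cup\mathrm{Bad}$ (symmetrically for $B'$). For a non-bad $v\in A$: since $b(v)<p$, its $B$-side estimate is $<p+\eps/4$, which forbids placement in $B'$ or $C'$; and the ``else'' branch is impossible, because it would force both estimates $\le p+\eps/4$ and hence $a(v)+b(v)\le 2p+3\eps/4<2p+\eps$, contradicting the identity — so $v\in A'$; the case $v\in B$ is symmetric, and a non-bad $v\in C$ can only enter $A'$ with $B$-side estimate $\le p+\eps/4$, whence $b(v)\le p+3\eps/8$. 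Consequently $|A\setminus A'|\le|\mathrm{Bad}|\le\eps n/8$, and for every $v\in A'\setminus\mathrm{Bad}$ (which lies in $A\cup C$) one gets $|\inN{v}\cap(B'\cup C')|\le|\inN{v}\cap(B\cup C)|+|A\setminus A'|\le pn+3\eps n/8+\eps n/8<pn+3\eps n/4$, and symmetrically in $B'$; so every vertex that $\mathrm{Restrict}$ leaves above $\mathrm{Move}$'s cutoff $pn+3\eps n/4$ is bad, and therefore $\mathrm{Move}$ relocates at most $\eps n/8$ vertices. Relocating $\le\eps n/8$ vertices changes every count $|\inN{w}\cap(B'\cup C')|$ and $|\inN{w}\cap(A'\cup C')|$ by at most $\eps n/8$; combining this with the fact that every $v$ that $\mathrm{Move}$ leaves in $A'$ had $|\inN{v}\cap(B'\cup C')|\le pn+3\eps n/4$ (else $\mathrm{Move}$ would have moved it out) and that every $v$ that $\mathrm{Move}$ moves into $A'$ was moved only because $|\inN{v}\cap(B'\cup C')|\le pn+3\eps n/4$ (and symmetrically for $B'$) shows that after $\mathrm{Move}$ no vertex of $A'$ has $|\inN{v}\cap(B'\cup C')|\ge pn+\eps n$ and no vertex of $B'$ has $|\inN{v}\cap(A'\cup C')|\ge pn+\eps n$. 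Hence $\mathrm{TestReach}$ returns $0$, the algorithm rejects, and (2) holds.

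I expect the main obstacle to be the joint bookkeeping of the four thresholds — the sampling accuracy $\eps/8$, the $\mathrm{Restrict}$ cutoff $p+\eps/4$, the $\mathrm{Move}$ cutoff $pn+3\eps n/4$, and the $\mathrm{TestReach}$ cutoff $pn+\eps n$ — arranged so that (i) Assumption~\ref{assum:dmin} eliminates the ``else'' branch for non-bad vertices, (ii) after $\mathrm{Restrict}$ only the $O(\eps n)$ bad vertices can be $(pn+\eps n)$-reachable inside $A'$ or $B'$, and (iii) the $O(\eps n)$ relocations performed by $\mathrm{Move}$ cannot lift any surviving count up to the rejection threshold $pn+\eps n$. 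A secondary point is obtaining $|\mathrm{Bad}|=O(\eps n)$ from a sample size that does not grow with $n$; this I would get from the expectation bound together with Markov's inequality, keeping $t=\Theta(\eps^{-2}\ln(1/(\eps\delta)))$.
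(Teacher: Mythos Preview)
Your proposal is correct and follows essentially the same route as the paper: fix a witnessing partition $(A,B,C)$, use a Chernoff/Hoeffding bound to control the per-vertex probability of a bad in-neighbourhood estimate, apply Markov's inequality (not a union bound) to cap the total number of bad vertices by $O(\eps n)$ with probability $1-\delta$, invoke Assumption~\ref{assum:dmin} to rule out the ``else'' branch of $\mathrm{Restrict}$ for non-bad vertices, and then argue that the $O(\eps n)$ relocations in $\mathrm{Move}$ cannot push any surviving count past the $\mathrm{TestReach}$ threshold. The paper packages these steps as Lemmas~\ref{lemma:smallReach}--\ref{lemma:time}, whereas you do the bookkeeping inline and with slightly sharper constants (accuracy $\eps/8$ and $|\mathrm{Bad}|\le\eps n/8$ versus the paper's $\eps/4$ and $\eps n/4$), but the argument is the same.
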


Before we start proving Theorem~\ref{theorem:main}, we explain the sampling method and 
prepare lemmas that provide guarantees to the subroutines of Algorithm~\ref{alg:testRbst}. 

\subsubsection*{Sampling}

There are several ways to sample vertices from the network. One way is to assign independent Bernoulli random variables to each vertex and sample each vertex with the same probability of $O(\text{Poly}(1/\eps)) \frac{1}{n}$. The other two ways are to sample a fixed number of $t=O(\text{Poly}(1/\eps))$ vertices uniformly at random with and without replacement. 

The analysis given in this paper is based on the scheme of uniformly sampling a fixed number of vertices with replacement. Technically speaking the outcomes of the sampling procedure are multisets. The intersection of the sampled multiset $U$ and any set $S$ is defined as a new multiset with support set $\mathrm{support}(U)\cap S$ and occurence number $\phi(u)$ the same as in $U$ for any $u\in \mathrm{support}(U)\cap S$. We ignore this point in the presentation of analysis as it is treated in the literature~\cite{GGR98}.

\subsubsection*{Approximating Number of Neighbors in Subsets}

We begin by showing that $A\cup U$ and $B\cup U$ are non-empty with bounded probability:
\begin{lemma}
\label{lemma:nonempty}
If $t\geq \frac{1}{\eps}\ln{\frac{16}{\delta}}$, with probability at least $1-\delta/8$, for $A$ and $B$ with sizes $\min\{\sizeof{A}, \sizeof{B}\}\geq \eps n$, $A\cap U$ and $B\cap U$ are non-empty.
\end{lemma}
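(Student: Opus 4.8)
The plan is to bound the probability that $A \cap U = \emptyset$ and, separately, that $B \cap U = \emptyset$, then apply a union bound. Since $U$ is a multiset obtained by drawing $t$ vertices uniformly at random with replacement, each of the $t$ draws lands outside $A$ with probability $1 - |A|/n \le 1 - \eps$, because $|A| \ge \eps n$. The draws being independent, the probability that \emph{all} $t$ of them miss $A$ is at most $(1-\eps)^t$. Using the elementary inequality $1 - x \le e^{-x}$ for $x \in [0,1]$, this is at most $e^{-\eps t}$.

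Next I would impose the hypothesis $t \ge \frac{1}{\eps}\ln\frac{16}{\delta}$, which gives $\eps t \ge \ln\frac{16}{\delta}$ and hence $e^{-\eps t} \le \delta/16$. So $\mathbb{P}[A \cap U = \emptyset] \le \delta/16$, and by the identical argument $\mathbb{P}[B \cap U = \emptyset] \le \delta/16$. A union bound then yields
\begin{equation*}
\mathbb{P}\left[A \cap U = \emptyset \ \text{or}\ B \cap U = \emptyset\right] \le \frac{\delta}{16} + \frac{\delta}{16} = \frac{\delta}{8},
\end{equation*}
so with probability at least $1 - \delta/8$ both $A \cap U$ and $B \cap U$ are non-empty, as claimed.

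There is no serious obstacle here; this is a routine concentration/coupon-collector-style estimate. The only points requiring a little care are (i) being explicit that with-replacement sampling makes the $t$ indicator events ``draw $i$ misses $A$'' genuinely independent, so the product bound $(1-\eps)^t$ is exact rather than merely an inequality, and (ii) handling the multiset convention consistently — $A \cap U$ should be read as the multiset intersection, which is non-empty exactly when at least one draw lands in $A$, so nothing changes. I would also note in passing that the bound is uniform over all pairs $A, B$ with $\min\{|A|,|B|\} \ge \eps n$ for a \emph{fixed} such pair; later uses that need it simultaneously over the constructed partition will combine this with the subsequent lemmas rather than with a union bound over exponentially many pairs, so stating it for a fixed pair suffices here.
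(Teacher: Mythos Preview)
Your proof is correct and follows essentially the same approach as the paper: bound $\mathbb{P}[A\cap U=\emptyset]\le(1-\eps)^t<\delta/16$ (and likewise for $B$), then apply a union bound. The paper's version is terser, asserting $(1-\eps)^t<\delta/16$ directly, while you supply the intermediate step via $1-x\le e^{-x}$; otherwise the arguments are identical.
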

The proof of Lemma~\ref{lemma:nonempty} is given in Appendix~\ref{appdxC}.

Then we show the following result for the $\rm{Restrict}$ algorithm.

\begin{lemma}
\label{lemma:restrict}
Let $(A,B,C)$ be a partition of $G$,  $\min\{\sizeof{A} ,\sizeof{B}\}\geq \eps n$, and both $A$ and $B$ are not $r$-reachable. Let $U$ be a set of vertices sampled uniformly at random, with $\sizeof{U}\geq \frac{8}{\eps^2}\ln{\frac{32}{\eps\delta}}$. 
Let  $\pi(U)=(U_A, U_B, U_C)$ be a partition of $U$ which satisfies $U_A\subset A$, $U_B\subset B$, $U_C\subset C$. Then under Assumption~\ref{assum:dmin}, the $\rm{Restrict}$ algorithm, which takes as input $G, \pi(U), p, \eps,$ and $\delta$, outputs a partition $(A', B', C')$ which satisfies the following property with probability $(1-\delta)$:
\begin{enumerate}
    \item[(*)] Let $\Gamma_{A'}$ (resp. $\Gamma_{B'}$) be the set of vertices in $A'$ (resp. $B'$) which consists of vertices $v$ such that $\sizeof{\inN{v}\cap (V\backslash A')}\geq r+3\eps/4$ (resp. $\sizeof{\inN{v}\cap (V\backslash B')}\geq r+3\eps/4$), then $\sizeof{\Gamma_{A'}}\leq \eps n/4$ (resp. $\sizeof{\Gamma_{B'}}\leq \eps n/4$). 
\end{enumerate}
\end{lemma}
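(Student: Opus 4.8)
The plan is to fix a partition $(A,B,C)$ as in the hypothesis, fix the ``correct'' sub-partition $\pi(U)=(U_A,U_B,U_C)$ with $U_A\subseteq A$, $U_B\subseteq B$, $U_C\subseteq C$, and show that with probability $\geq 1-\delta$ the empirical in-neighbor counts on $U$ are simultaneously close to their true (population) values for \emph{all} vertices $v\in V\setminus U$. Concretely, for each $v$ I would think of the quantity $\sizeof{\inN{v}\cap(U_A\cup U_C)}/t$ as an average of $t$ i.i.d.\ Bernoulli samples (sampling with replacement) whose mean is $\sizeof{\inN{v}\cap(A\cup C)}/n$, and similarly for $B\cup C$. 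A Hoeffding/Chernoff bound gives that, for a single $v$, the empirical frequency deviates from the true frequency by more than $\eps/4$ with probability at most $2\exp(-2t(\eps/4)^2)=2\exp(-t\eps^2/8)$. Taking a union bound over the at most $2n$ such events (two per vertex), the choice $\sizeof{U}=t\geq \frac{8}{\eps^2}\ln\frac{32}{\eps\delta}$ makes the total failure probability at most $2n\cdot 2\exp(-t\eps^2/8)\le \tfrac{\eps\delta}{2}\cdot\tfrac{1}{?}$; I'd tune the constants so this is $\le\delta$ (the $1/\eps$ inside the log is exactly what is needed to beat the factor $n$, since $n\le 1/\eps\cdot n$ is not quite it — rather one uses $n\le$ something; more carefully, $2n\exp(-t\eps^2/8)\le\delta$ needs $t\ge \frac{8}{\eps^2}\ln\frac{2n}{\delta}$, and then one observes that sets of size $<\eps n$ are irrelevant by Lemma~\ref{lemma:smallReach}, effectively replacing $n$ by $1/\eps$; I'd present this reduction carefully).

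On the good event, I then argue that $\rm{Restrict}$'s output has the stated property. The key observations: (i) every vertex $v\in A$ has $\sizeof{\inN{v}\cap(B\cup C)}=\sizeof{\inN{v}\setminus A}< r$ because $A$ is not $r$-reachable, hence its empirical count on $U_B\cup U_C$ is $<p+\eps/4$ on the good event, so the ``$B'$-branch'' test fails for it; similarly the real $C$-vertices and the genuine structure are preserved up to the sampling error. (ii) Consequently any vertex that $\rm{Restrict}$ places in $A'$ either came from $A$, or had empirical $U_A\cup U_C$-count $>p+\eps/4$ hence true count $>p$ — wait, that's the wrong direction; instead I track: a vertex placed in $A'$ by the ``$A'$-branch'' has empirical $U_B\cup U_C$-count $\le p+\eps/4$, hence true $\sizeof{\inN{v}\cap(B\cup C)}\le pn+\eps n/2$, and I must compare $B\cup C$ with $V\setminus A'$. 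The point is that $A'\supseteq U_A$ but $A'$ may differ from $A$; I bound $|A\triangle A'|$ and $|B\triangle B'|$ and $|C\triangle C'|$ — actually $\rm{Restrict}$ can misplace the ``arbitrary'' vertices (those failing all three tests) and vertices near the threshold. The cleanest route: show directly that any $v\in\Gamma_{A'}$ must satisfy $\sizeof{\inN{v}\cap(B\cup C)}\ge r$ up to $O(\eps n)$ slack \emph{or} $v\notin A$, and count the latter. I'd show the set of vertices that $\rm{Restrict}$ assigns to $A'$ but lie in $B\cup C$ is small, because such a vertex had empirical $U_A\cup U_C$-count $>p+\eps/4$ (to enter via the $A'$-branch) while, being in $B$ or $C$ which... this is where the non-$r$-reachability of $B$ enters symmetrically.

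The step I expect to be the main obstacle is exactly this bookkeeping: translating the pointwise empirical-vs-true closeness into a bound of $\eps n/4$ on $\sizeof{\Gamma_{A'}}$ and $\sizeof{\Gamma_{B'}}$, because $\rm{Restrict}$'s threshold is $p+\eps/4$, the ``bad'' threshold in the property is $r+3\eps/4$ (so there's $\eps/2$ of slack to absorb the $\eps/4$ sampling error on each of two counts, plus the discrepancy between $V\setminus A'$ and $B\cup C$), and the ``arbitrary assignment'' case for vertices failing all tests must be shown harmless — such a vertex has true $\sizeof{\inN{v}\setminus A}\le pn+\eps n/2$ and true $\sizeof{\inN{v}\setminus B}\le pn+\eps n/2$, so it genuinely belongs (up to slack) to $C$ and placing it anywhere is fine for the $\Gamma$-count as long as it's not placed in $A'$ while having many $V\setminus A'$ neighbors. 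I would handle this by observing that the only vertices contributing to $\Gamma_{A'}$ are those with true $(B\cup C)$-in-degree close to $pn$, i.e.\ genuine $A$-vertices or genuine $C$-vertices with $(B\cup C)$-in-degree $\approx pn$; for $A$-vertices the count is $<r$ by hypothesis and $V\setminus A'$ differs from $B\cup C$ only by the (small, to-be-bounded) symmetric difference $|A\triangle A'|$, and I'd close the loop by showing $|A'\setminus A|$ is itself $O(\eps n)$ since each such vertex contributes a ``misclassification'' that the empirical bound forbids except on a set whose size is controlled by the good event — ultimately everything reduces to: on the good event, the number of vertices whose true $(B\cup C)$-in-degree lies in the window $[pn, pn+\eps n]$ and that $\rm{Restrict}$ puts in $A'$, is at most $\eps n/4$, which itself needs no probabilistic input once we've used the good event to align empirical and true counts, but does need the structural facts that $A,B$ are not $r$-reachable. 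I'd budget the constants ($\eps/4$ thresholds, $3\eps/4$ slack) so the arithmetic works out and defer the explicit inequalities.
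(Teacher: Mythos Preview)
Your plan has a genuine gap in the probabilistic step. You propose to union-bound over all $\le 4n$ per-vertex deviation events so that \emph{every} vertex simultaneously has empirical counts within $\eps/4$ of the truth; as you yourself compute, this would need $t\ge \tfrac{8}{\eps^2}\ln\tfrac{4n}{\delta}$, which depends on $n$ and is strictly larger than the hypothesis $t\ge\tfrac{8}{\eps^2}\ln\tfrac{32}{\eps\delta}$ whenever $n$ is large compared with $1/\eps$. Your proposed rescue via Lemma~\ref{lemma:smallReach} does not work: that lemma restricts the \emph{partitions} one must examine to those with $\min\{\sizeof{A},\sizeof{B}\}\ge \eps n$, but it does nothing to reduce the number of vertices whose empirical neighbor counts must be controlled---$\rm{Restrict}$ still processes every vertex in $V\setminus U$.

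The paper avoids the union bound entirely by not insisting on simultaneous accuracy. With the stated $t$ one gets a per-vertex ``bad estimate'' probability of at most $\eps\delta/8$; linearity of expectation gives $\E[\#\text{bad}]\le \eps\delta n/8$, and then \emph{Markov's inequality} yields $\Pr[\#\text{bad}\ge \eps n/4]\le \delta/2$. The crucial observation you are missing is that property~(*) already tolerates up to $\eps n/4$ misclassified vertices, so it suffices that \emph{most} vertices be accurate, not all of them---and this relaxation is precisely what makes the sample size independent of $n$.

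A secondary issue in your deterministic bookkeeping: you attempt to control $\sizeof{A\triangle A'}$, but $\sizeof{A'\setminus A}$ need not be small (the paper remarks this explicitly). What \emph{is} small is $\sizeof{A\setminus A'}$: every ``normal'' $v\in A$ has $\sizeof{\inN{v}\cap(B\cup C)}<r$ (since $A$ is not $r$-reachable) and, by Assumption~\ref{assum:dmin}, $\sizeof{\inN{v}\cap(A\cup C)}>r+\eps n$; for a vertex with a good estimate these two facts force the $\rm{Restrict}$ branches to put $v$ into $A'$, so $A\setminus A'$ contains only bad vertices and $\sizeof{A\setminus A'}\le \eps n/4$. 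Then for any $v\in A'$ with $\sizeof{\inN{v}\cap(B\cup C)}<pn+\eps n/2$ one has $\sizeof{\inN{v}\cap(V\setminus A')}\le \sizeof{\inN{v}\cap(B\cup C)}+\sizeof{A\setminus A'}<pn+3\eps n/4$; hence $\Gamma_{A'}$ is contained in the set $\{v\in A':\sizeof{\inN{v}\cap(B\cup C)}\ge pn+\eps n/2\}$, which by the $\rm{Restrict}$ rules can contain only non-normal vertices. This route is cleaner than the symmetric-difference accounting you sketch.
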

The proof of Lemma~\ref{lemma:restrict} is provided in Appendix~\ref{appdxC}.

\subsubsection*{Correcting Large Violations}
The $\rm{Move}$ algorithm updates $(A',B',C')$ with the guarantee given by the following lemma.

\begin{lemma}
\label{lemma:move}
If Assumption~\ref{assum:dmin} holds, then
given a partition $(A',B',C')$ which satisfies property (*) described in Lemma~\ref{lemma:restrict}, the $\rm{Move}$ algorithm returns an updated $(A',B',C')$ in which $A'$ and $B'$ are both not $(r+\eps n)$-reachable.
\end{lemma}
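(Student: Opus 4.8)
The idea is to track precisely what $\rm{Move}$ does, pass by pass, and show that after the pass every vertex has a "small" number of in-neighbors on the opposite side. Start from the partition $(A',B',C')$ returned by $\rm{Restrict}$, which by property (*) has at most $\eps n/4$ vertices in $\Gamma_{A'}\cup\Gamma_{B'}$ combined — that is, at most $\eps n/4$ vertices of $A'$ with $\sizeof{\inN{v}\cap(V\setminus A')}\ge r+3\eps n/4$, and similarly for $B'$ (note: I will read the "$3\eps/4$" in (*) as "$3\eps n/4$"). First I would observe that $\rm{Move}$ only ever moves vertices out of $\Gamma_{A'}$ and $\Gamma_{B'}$: a vertex $v\in A'$ is touched only if $\sizeof{\inN{v}\cap(B'\cup C')}=\sizeof{\inN{v}\cap(V\setminus A')}>pn+3\eps n/4$, i.e. $v\in\Gamma_{A'}$, and symmetrically for $B'$. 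Hence the total number of vertices relocated by $\rm{Move}$ is at most $\sizeof{\Gamma_{A'}}+\sizeof{\Gamma_{B'}}\le\eps n/4$.

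The second step is to analyze where a moved vertex lands and to confirm that the vertex itself is no longer a violation. If $v\in\Gamma_{A'}$ is moved, $\rm{Move}$ checks $\sizeof{\inN{v}\cap(A'\cup C')}$: if this exceeds $pn+3\eps n/4$, $v$ is placed in $C''$, where it is irrelevant to reachability (sets $A''$ and $B''$ only look at in-neighbors, not $C$); otherwise $\sizeof{\inN{v}\cap(A'\cup C')}\le pn+3\eps n/4$ and $v$ is placed in $B''$, so its in-neighbors from $A''\cup C''\subseteq A'\cup C'$ number at most $pn+3\eps n/4<pn+\eps n$. Symmetrically for $v\in\Gamma_{B'}$. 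So every vertex that was a large violator before the pass is either removed from the active sets or ends up with strictly fewer than $pn+\eps n$ in-neighbors on its new opposite side.

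Third, I must handle vertices that were \emph{not} violators before the pass — the vertices of $A'\setminus\Gamma_{A'}$ that stay in $A''$, and likewise for $B'$ — and show they have not become violators because of the relocations. For such a $v\in A'\setminus\Gamma_{A'}$ we had $\sizeof{\inN{v}\cap(V\setminus A')}\le pn+3\eps n/4$. After the pass, $V\setminus A''$ differs from $V\setminus A'$ only by the at most $\eps n/4$ relocated vertices, so $\sizeof{\inN{v}\cap(V\setminus A'')}\le\sizeof{\inN{v}\cap(V\setminus A')}+\eps n/4\le pn+3\eps n/4+\eps n/4 = pn+\eps n$; strictly less if we are careful with the threshold, or at worst we adjust constants so the $\rm{TestReach}$ threshold $(pn+\eps n)$ is not met. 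The same bound holds for the symmetric case, and vertices newly placed in $C''$ or moved across contribute nothing beyond what was handled in step two. Combining: after $\rm{Move}$, no vertex of $A'$ witnesses $\sizeof{\inN{v}\cap(B'\cup C')}\ge pn+\eps n$ and no vertex of $B'$ witnesses $\sizeof{\inN{v}\cap(A'\cup C')}\ge pn+\eps n$, i.e. neither $A'$ nor $B'$ is $(r+\eps n)$-reachable (using $pn=r$).

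**Main obstacle.** The delicate point is the bookkeeping in step three: a single relocation changes the opposite-side count of \emph{many} other vertices by one, so one must argue that the \emph{aggregate} shift seen by any fixed vertex is bounded by the \emph{total} number of relocations, $\le\eps n/4$, not by something larger — this is fine since each relocated vertex contributes at most $1$ to $\sizeof{\inN{v}\cap(V\setminus A'')}$. A second subtlety is the interplay of the two directions of moves happening in the same pass: $\rm{Move}$ reads the \emph{old} sets $A',B',C'$ in its conditionals while writing to $A'',B'',C''$, so all the "before" quantities are consistent and the relocations are simultaneous rather than sequential; I would make this explicit to avoid any ordering ambiguity. Finally I would double-check the constant slack ($3\eps n/4$ in the thresholds of $\rm{Restrict}$/$\rm{Move}$ versus $\eps n$ in $\rm{TestReach}$, with $\eps n/4$ spent on relocations) closes exactly, possibly tightening a strict vs.\ non-strict inequality, so that $(r+\eps n)$-reachability is genuinely ruled out.
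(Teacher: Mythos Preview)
Your proof is correct and follows essentially the same approach as the paper's (very terse) argument: $\mathrm{Move}$ removes the violators from $A'$ and $B'$, and any surviving vertex of $A'$ sees its opposite-side count grow by at most $\sizeof{\Gamma_{A'}}\le\eps n/4$, so the final threshold $pn+\eps n$ is not reached. One small bookkeeping slip: you write $\sizeof{\Gamma_{A'}}+\sizeof{\Gamma_{B'}}\le\eps n/4$, but property~(*) gives $\eps n/4$ for \emph{each}, hence $\le\eps n/2$ combined; fortunately only the vertices leaving $A'$ (namely $\Gamma_{A'}$) can increase $\sizeof{\inN{v}\cap(V\setminus A'')}$, so your step-three increment of $+\,\eps n/4$ is still the right one and the arithmetic closes.
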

The Proof of Lemma~\ref{lemma:move} is provided in Appendix~\ref{appdxC}.

\begin{figure}[htbp]
    \centering
    \includegraphics[width=.5\linewidth]{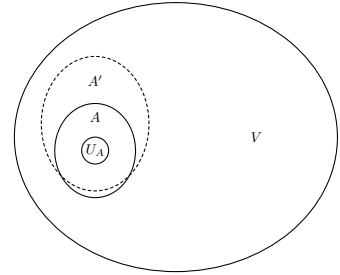}
    \caption{The relationships between the subsets $U_A$, $A$, $A'$, and $V$.}
    \label{fig:subsets}
\end{figure}

The relationships between the subsets $U_A$, $A$ and $A'$ are shown in Fig.~\ref{fig:subsets}. 
The proposed algorithm ensures that $A'$ and $B'$ are disjoint. The sets $A'\cap B$ and $B' \cap A$ can be non-empty, but include $O(\eps n)$ vertices. The size of the set $A'\backslash A$ is not necessarily small. We do not show the details of the analysis because it is irrelevant to the correctness of the algorithm. 

\subsubsection*{Complexity} Next we analyze the running time of the proposed algorithm.
\begin{lemma}
\label{lemma:time}
The running time of the \rm{SampledRbstTst} algorithm is $\exp(\tilde{O}(1/\eps^2))m$.
\end{lemma}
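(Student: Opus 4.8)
The plan is to bound the running time by accounting separately for the sampling step, the number of partitions of $U$ that Algorithm~\ref{alg:testRbst} enumerates, and the cost of the three subroutines per partition. First I would fix the sample size: by combining the requirements of Lemmas~\ref{lemma:nonempty} and~\ref{lemma:restrict}, it suffices to take $t(\eps,\delta) = O\!\bigl(\tfrac{1}{\eps^2}\ln\tfrac{1}{\eps\delta}\bigr) = \tilde O(1/\eps^2)$; drawing $U$ (with replacement) costs $O(t)$ time, which is negligible. The outer loop of Algorithm~\ref{alg:testRbst} ranges over all $3$-partitions $(U_A,U_B,U_C)$ of the $t$ sampled vertices, of which there are at most $3^t = \exp(O(t)) = \exp(\tilde O(1/\eps^2))$; amplification by repetition (Section~\ref{sec:formulate}) contributes only a further factor of $O(\ln(1/\sigma))$, which is absorbed.

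Next I would show that each iteration of the loop — one call each to $\rm{Restrict}$, $\rm{Move}$, and $\rm{TestReach}$ — runs in $O(m)$ time, so that the total is $\exp(\tilde O(1/\eps^2))\cdot O(m) = \exp(\tilde O(1/\eps^2))\, m$, as claimed. For this I would preprocess, once per partition, the quantities that the subroutines repeatedly query. The $\rm{Restrict}$ subroutine (Algorithm~\ref{alg:Restrict}) makes a single pass over $v\in V\setminus U$ and for each $v$ needs $\sizeof{\inN{v}\cap(U_A\cup U_C)}$ and $\sizeof{\inN{v}\cap(U_B\cup U_C)}$; these are computed for all $v$ simultaneously in $O(m)$ time by scanning, for each sampled vertex $u\in U$, its out-edges and incrementing the appropriate counter at the head of each edge. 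Similarly, $\rm{Move}$ (Algorithm~\ref{alg:Move}) and $\rm{TestReach}$ (Algorithm~\ref{alg:testReach}) each make a single pass over the vertices and for each $v$ consult $\sizeof{\inN{v}\cap(A'\cup C')}$ and $\sizeof{\inN{v}\cap(B'\cup C')}$; after $\rm{Restrict}$ produces $(A',B',C')$, these can likewise be tabulated for all $v$ in one $O(m)$ scan of the edge set. The bookkeeping in $\rm{Move}$ moves at most $O(\eps n)$ vertices and can be handled either by rebuilding the tables once afterwards in $O(m)$ time or by incremental updates; either way the per-partition cost stays $O(m+n) = O(m)$ (using that $m \ge d_{\min}\, n/1 \ge n$ under Assumption~\ref{assum:dmin}, so $n = O(m)$).

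I would then assemble these pieces: $\text{(sampling)} + 3^{t}\cdot O(m) = O(t) + \exp(\tilde O(1/\eps^2))\,O(m)$, and since $t = \tilde O(1/\eps^2)$ this is $\exp(\tilde O(1/\eps^2))\, m$. I do not expect any real obstacle here; the only mildly delicate point is making sure that the multiset bookkeeping for $U$ (sampling with replacement) and the per-partition recomputation of the intersection sizes are genuinely $O(m)$ rather than $O(m\cdot t)$ — the key observation being that each of the $t$ sampled vertices contributes its out-edges exactly once per scan, so the scans cost $O(m)$ independent of how the counts are distributed. A secondary point worth stating explicitly is that the exponent in $3^t$ is what forces the $\exp(\cdot)$ form of the bound, and that the $\ln(1/(\eps\delta))$ inside the $\tilde O$ comes precisely from the logarithmic factors in $t(\eps,\delta)$.
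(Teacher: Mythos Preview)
Your proposal is correct and follows the same decomposition as the paper: bound the number of $3$-partitions of $U$ by $3^{t}=\exp(\tilde O(1/\eps^2))$ using $t=\tilde O(1/\eps^2)$, and argue that each of $\rm{Restrict}$, $\rm{Move}$, $\rm{TestReach}$ runs in $O(m)$ per partition. The paper's own proof is a two-sentence version of exactly this; your additional implementation details (how to tabulate the intersection counts, why $n=O(m)$ under Assumption~\ref{assum:dmin}) are sound elaborations rather than a different route.
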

The proof of Lemma~\ref{lemma:time} is shown in Appendix~\ref{appdxC}.

\begin{proof}[Proof of Theorem~\ref{theorem:main}]
Suppose Assumption~\ref{assum:dmin} holds. By combining Lemmas~\ref{lemma:smallReach}, ~\ref{lemma:nonempty},~\ref{lemma:restrict}, and~\ref{lemma:move}, we know that if there exists a partition $(A,B,C)$ where $A$ and $B$ are not $r$-reachable, we obtain a partition $(A',B',C')$ in which $A'$ and $B'$ are not $(r+\Delta)$-reachable with probability at least $(1-\delta)$. Then Algorithm~\ref{alg:testReach} will return $0$ for such a partition $(A',B',C')$. Therefore Algorithm~\ref{alg:testRbst} returns reject for such instances with probability $(1-\delta)$. 

Algorithm~\ref{alg:testRbst} never rejects $G$ that is $(r+\Delta)$-robust because such partitions $(A',B',C')$ in which $A'$ and $B'$ are not $(r+\Delta)$-reachable does not exist. Therefore Algorithm~\ref{alg:testRbst} returns accept for such instances.

By combining the two properties stated above and Lemma~\ref{lemma:time}, we attain Theorem~\ref{theorem:main}.
\end{proof}

\section{Discussion}
\label{sec:disc}






%
%

\subsection{Estimating the Interval for Robustness}
Given Algorithm~\ref{alg:testRbst}, we can easily construct an algorithm that finds an interval of length at most $(1+\beta)\Delta$ for any constant $\beta>0$, which includes the maximal $\bar{r}$ such that the digraph is $\bar{r}$-robust. The algorithm is a modified binary search. The lower bound $\underline{\ell}$ and upper bound $\overline{\ell}$ of the interval are initialized as $0$ and $n/2$, respectively. In each round, if $\overline{\ell}-\underline{\ell}\geq (1+\beta)\Delta$, we run Algorithm~\ref{alg:testRbst} with $p = (\underline{\ell}-\Delta+ \overline{\ell})/2n$ and $\eps = \Delta/n$. If the algorithm returns accept, we let $\underline{\ell}\gets (\underline{\ell}-\Delta+ \overline{\ell})/2$, else we let $\overline{\ell} \gets (\underline{\ell}+\Delta+ \overline{\ell})/2$. We stop once  $\overline{\ell}-\underline{\ell} \leq (1+\beta)\Delta$ is satisfied. The interval $[\underline{\ell}, \overline{\ell})$ is then returned as an estimation of the interval that includes $\bar{r}$. We note that the success probability of Algorithm~\ref{alg:testRbst} needs to be amplified to guarantee the overall success probability of the binary search. We omit the analysis since it follows straightforwardly by a union bound.

\subsection{Without Assuming Minimum In-Degree}
\label{sec:woinD}
Throughout our analysis we assume that Assumption~\ref{assum:dmin} holds. On the other hand, by combining the $\rm{ExamDegree}$  algorithm and the $\rm{SampledRbstTst}$ algorithm, we attain the following corollary for an arbitrary digraph.

\begin{corollary}
Given a digraph $G=(V,E)$, two integers $r>0$, $\Delta > 0$, there exists an algorithm that runs in $\exp(\tilde{O}(1/\eps^2))m$ time which 1) outputs accept to certify $r$-robustness if $G$ is $(2r+\Delta+1)$-robust; 2) outputs reject to refute $(2r+\Delta+1)$-robustness if $G$ is not $r$-robust, with probability $(1-\delta)$.
\end{corollary}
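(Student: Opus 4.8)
The plan is to combine the two algorithms $\rm{ExamDegree}$ and $\rm{SampledRbstTst}$ into a single two-stage procedure and verify it satisfies the stated input--output map. First I would run $\rm{ExamDegree}$ on $G$ to check whether Assumption~\ref{assum:dmin} holds with the particular choice of parameters $r$ and $\Delta' \defeq \Delta+1$ (so that $2r+\Delta'=2r+\Delta+1$); by Lemma~\ref{lemma:degree} this takes $O(m)$ time. There are two cases. If Assumption~\ref{assum:dmin} is \emph{violated}, then $\rm{ExamDegree}$ returns a partition $(\{v\},V\backslash\{v\})$ witnessing that neither block is $(2r+\Delta')$-reachable, i.e. neither is $(2r+\Delta+1)$-reachable; hence $G$ is not $(2r+\Delta+1)$-robust and the combined algorithm outputs reject. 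In this case we must argue correctness of rejecting: rejecting is allowed whenever $G$ is not $(2r+\Delta+1)$-robust, and in particular it is certainly allowed when $G$ is not $r$-robust; and since $G$ is genuinely not $(2r+\Delta+1)$-robust we never falsely reject a $(2r+\Delta+1)$-robust graph. So this branch is deterministically consistent with the claimed map.

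If Assumption~\ref{assum:dmin} \emph{holds} (with parameters $r,\Delta'$), I would invoke $\rm{SampledRbstTst}(G,\eps',\delta,r)$ with $\eps' \defeq \Delta'/n = (\Delta+1)/n$. By Theorem~\ref{theorem:main}, this (i) certifies $r$-robustness whenever $G$ is $(r+\eps' n)$-robust $=(r+\Delta+1)$-robust, and (ii) refutes $(r+\Delta+1)$-robustness with probability $\geq 1-\delta$ whenever $G$ is not $r$-robust, in $\exp(\tilde O(1/\eps'^2))m$ time. To finish I would do the bookkeeping: when $G$ is $(2r+\Delta+1)$-robust it is in particular $(r+\Delta+1)$-robust (since $2r+\Delta+1 \geq r+\Delta+1$ as $r\geq 1$), so case (i) applies and the algorithm accepts, certifying $r$-robustness as required; when $G$ is not $r$-robust, case (ii) gives a reject refuting $(r+\Delta+1)$-robustness, and since $G$ not $(r+\Delta+1)$-robust implies $G$ not $(2r+\Delta+1)$-robust (larger robustness parameter is a stronger property), this is a valid refutation of $(2r+\Delta+1)$-robustness. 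The failure probability is $\delta$ from the single call to $\rm{SampledRbstTst}$.

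For the running time I would note that the first stage costs $O(m)$ and the second costs $\exp(\tilde O(1/\eps'^2))m$, and since $\eps' = (\Delta+1)/n \geq \Delta/n = \eps$ we have $1/\eps'^2 \leq 1/\eps^2$, so the total is $\exp(\tilde O(1/\eps^2))m$, matching the claim; alternatively one simply defines $\eps$ in the corollary statement to be $\eps'$ and the bound is immediate. The argument is essentially a case split plus monotonicity observations, so I do not expect a genuine obstacle; the only thing to be careful about is the precise accounting of which robustness parameters imply which — in particular that refuting $(r+\Delta+1)$-robustness really does refute $(2r+\Delta+1)$-robustness, and that the shift by $r$ (rather than $\eps n$) between the ``certify'' and ``refute'' thresholds is exactly what Lemma~\ref{lemma:degree}'s $(2r+\Delta)$-reachability conclusion, combined with $\Delta'=\Delta+1$, is designed to produce. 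Assembling these pieces yields the corollary.
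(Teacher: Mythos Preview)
Your proposal is correct and follows the same two-stage strategy as the paper: run $\rm{ExamDegree}$ first, reject if the minimum in-degree is too small, and otherwise hand off to $\rm{SampledRbstTst}$ and appeal to Theorem~\ref{theorem:main}. The only difference is that the paper works directly with $\Delta$ (rejecting when $d_{\min}\le 2r+\Delta$ and calling $\rm{SampledRbstTst}$ with $\eps=\Delta/n$) rather than shifting to $\Delta'=\Delta+1$; your shift is unnecessary because $d_{\min}\le 2r+\Delta$ already forces $G$ to fail $(2r+\Delta+1)$-robustness, but it does no harm to the argument.
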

\begin{proof}
We first run $\rm{ExamDegree}$ to calculate $d_{\min}$. If $d_{\min} \leq  2r+\Delta$, we let the algorithm output reject; if $d_{\min} > 2r+\Delta$, then we let the algorithm output the result returned by ${\rm{SampledRbstTst}}(G, \eps\defeq \Delta/n, \delta, r)$. 
\end{proof}

\subsection{Limitation of the Algorithm}
The algorithm cannot be applied to cases where $\eps=0$ regardless of the running time. In addition, if $\eps$ is $ O(n^{-1/2})$, the running time is worse than the $\exp(O(n))m$ time exact algorithm~\cite{LK13}. The gain in efficiency is attained if $\eps$ is $\Omega(n^{-\frac{1}{2}+c})$ for a constant $c>0$. If $\eps>0$ is a fixed constant (independent of $n$), the algorithm is a fixed parameter algorithm with running time linear in $m$, although it also depends on the fixed parameter $\eps$. We note that arbitrary dependency only on the parameter is allowed for fixed parameter algorithms. Similar dependency appears in property testing algorithms~\cite{GGR98} and approximation algorithms~\cite{AKK99} for dense graphs.
%



\section{Conclusion and Future Work}
\label{sec:conclude}
We have proposed an sample-based algorithm to approximately test $r$-robustness of a network. Computational complexity of the algorithm is investigated. The algorithm shows a tradeoff between precision and running time. Future work includes improving the running time of the algorithm, discussing the impact of regularity conditions in graphs, and investigating other approaches of approximation.



\appendix
\subsection{Probabilistic Inequalities}

We use Markov's inequality and additive Chernoff bounds.

\begin{lemma}[Markov's Inequality]
Let $X$ be a non-negative random variable, then for all $k>0$,
$\prob{}{X\geq k\cdot \expec{}{X}}\leq \frac{1}{k}$.
\end{lemma}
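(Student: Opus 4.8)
The plan is to derive the bound from the elementary pointwise inequality $X \geq a \cdot \mathbf{1}_{\{X \geq a\}}$, valid for any threshold $a > 0$ when $X$ is non-negative, and then specialize to the threshold $a \defeq k \cdot \expec{}{X}$.

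First I would fix $k > 0$ and set $a \defeq k \cdot \expec{}{X}$, assuming for the moment that $\expec{}{X} > 0$ so that $a > 0$. On the event $\{X \geq a\}$ one has $X \geq a = a \cdot 1$, whereas on its complement $a \cdot \mathbf{1}_{\{X \geq a\}} = 0 \leq X$ by non-negativity of $X$; hence the domination $X \geq a \cdot \mathbf{1}_{\{X \geq a\}}$ holds surely.

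Next I would take expectations of both sides. Monotonicity and linearity of expectation give $\expec{}{X} \geq a \cdot \expec{}{\mathbf{1}_{\{X \geq a\}}} = a \cdot \prob{}{X \geq a}$. Dividing through by $a = k \cdot \expec{}{X} > 0$ then yields $\prob{}{X \geq k \cdot \expec{}{X}} \leq \expec{}{X}/a = 1/k$, which is exactly the claim.

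Finally I would dispose of the degenerate case $\expec{}{X} = 0$: since $X \geq 0$ this forces $X = 0$ almost surely, so the stated threshold $k \cdot \expec{}{X}$ equals $0$ and the inequality holds vacuously whenever $1/k \geq 1$, i.e. $k \leq 1$; for $k > 1$ one reads the inequality under the usual convention that it is invoked only with a strictly positive threshold, which is the only way it is used later in this paper. I do not expect any real obstacle here — the sole step deserving a remark is the pointwise domination, and it is immediate from $X \geq 0$; the rest is linearity of expectation and a division.
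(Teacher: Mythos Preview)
The paper states Markov's inequality as a standard tool and gives no proof of its own, so there is nothing to compare against; your argument is the usual textbook proof via the pointwise bound $X \geq a\,\mathbf{1}_{\{X \geq a\}}$ and is correct. Your remark on the degenerate case $\expec{}{X}=0$ is apt: as literally stated the bound fails for $k>1$ there, and the paper indeed only invokes it with a strictly positive threshold (in the proof of Lemma~\ref{lemma:restrict}), so your convention matches the intended use.
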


\begin{lemma}[Chernoff bound, \cite{Hoe63, GGR98}]
Let $X_1$, $X_2$, $\ldots$, $X_t$ be $t$ independent Bernoulli random variables where $X_i\in\{0,1\}$. Let $q\defeq (1/t)\sum_i \expec{}{X_i}$.  Then for every $\gamma \in [0,1]$, the following bounds hold:
    $\mathbb{P}[\frac{1}{t}\sum_{i=1}^t X_i > q+ \gamma]< \exp(-2\gamma^2 t),$  and 
    $\mathbb{P}[\frac{1}{t}\sum_{i=1}^t X_i < q- \gamma]< \exp(-2\gamma^2 t)$.
\end{lemma}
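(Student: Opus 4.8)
The plan is to reduce the two-sided claim to its upper tail and then run the standard exponential-moment (Chernoff--Hoeffding) argument. First I would note that the second inequality follows from the first applied to the variables $X_i' \defeq 1 - X_i$, which are again independent Bernoulli variables with $(1/t)\sum_i \expec{}{X_i'} = 1-q$; since $\{\tfrac{1}{t}\sum_i X_i < q - \gamma\}$ is the same event as $\{\tfrac{1}{t}\sum_i X_i' > (1-q) + \gamma\}$, the two bounds are equivalent, and it suffices to prove $\prob{}{\tfrac{1}{t}\sum_i X_i > q + \gamma} < \exp(-2\gamma^2 t)$.

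For the upper tail I would set $S \defeq \sum_{i=1}^t X_i$, so $\expec{}{S} = tq$ and $\{S/t > q+\gamma\} \subseteq \{S - \expec{}{S} \ge t\gamma\}$. For any $\lambda > 0$, Markov's inequality applied to the nonnegative variable $e^{\lambda(S - \expec{}{S})}$ gives
\[
\prob{}{S - \expec{}{S} \ge t\gamma} \le e^{-\lambda t \gamma}\,\expec{}{e^{\lambda(S - \expec{}{S})}} = e^{-\lambda t\gamma}\prod_{i=1}^t \expec{}{e^{\lambda(X_i - \expec{}{X_i})}},
\]
the last step by independence. The crux is a per-coordinate bound on the centered moment generating function: since $X_i - \expec{}{X_i}$ is mean zero and takes values in an interval of length $1$, Hoeffding's lemma gives $\expec{}{e^{\lambda(X_i - \expec{}{X_i})}} \le e^{\lambda^2/8}$. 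Substituting, $\prob{}{S - \expec{}{S} \ge t\gamma} \le \exp\!\big(t(\lambda^2/8 - \lambda\gamma)\big)$, and minimizing the exponent over $\lambda>0$ at $\lambda = 4\gamma$ yields the exponent $-2\gamma^2 t$; the strict inequality in the statement is then the standard one, cf.~\cite{GGR98}.

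The one ingredient beyond bookkeeping is Hoeffding's lemma itself: for a mean-zero $Y$ supported on $[a,b]$, $\expec{}{e^{\lambda Y}} \le e^{\lambda^2(b-a)^2/8}$. I would prove it by convexity of $y \mapsto e^{\lambda y}$ on $[a,b]$, which gives $e^{\lambda Y} \le \tfrac{b-Y}{b-a}e^{\lambda a} + \tfrac{Y-a}{b-a}e^{\lambda b}$; taking expectations removes the term linear in $Y$, and the logarithm $\psi(\lambda)$ of the resulting upper bound satisfies $\psi(0)=\psi'(0)=0$ and $\psi''(\lambda) \le (b-a)^2/4$, so Taylor's theorem gives $\psi(\lambda)\le \lambda^2(b-a)^2/8$. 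The estimate $\psi''(\lambda)\le (b-a)^2/4$ is the only genuine computation and reduces to the fact that any $[a,b]$-valued random variable has variance at most $(b-a)^2/4$. I expect this to be the only mildly delicate point, essentially keeping the constants aligned so that the factor $2$ in $\exp(-2\gamma^2 t)$ comes out correctly; and since the statement is classical~\cite{Hoe63}, in the paper it would be equally legitimate to cite it rather than reproduce the argument.
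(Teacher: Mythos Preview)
Your argument is correct and is the standard Chernoff--Hoeffding derivation. The paper, however, does not prove this lemma at all: it is stated in the appendix as a known probabilistic inequality with citations to \cite{Hoe63,GGR98} and is used as a black box in the proof of Lemma~\ref{lemma:restrict}. So there is no ``paper's own proof'' to compare against; your last remark that it would be equally legitimate to cite the result is exactly what the authors do.
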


\subsection{Proofs from Section~\ref{sec:algo}}
\label{appdxB}
\begin{proof}[Proof of Lemma~\ref{lemma:degree}]
We note that $d_{\min}\geq r$ always holds. If Assumption~\ref{assum:dmin} does not hold, then $r \leq d_{\min} \leq (2r+\Delta)$. The problem is trivial in this case because in the partition $(\{u\}, V\backslash \{u\})$, both non-empty subsets are not $(2r+\Delta)$ reachable for a vertex $u$ with minimum degree.
An $O(m)$ time algorithm finds a vertex with minimum in-degree and the corresponding partition. We refer to the algorithm as $\rm{ExamDegree}$.
\end{proof}

\begin{proof}[Proof of Lemma~\ref{lemma:smallReach}]
We first show sufficiency. Suppose the digraph $G$ is $r_\eps$-robust. For any vertex $v$ in a non-empty subset $A\subset V$ with $\sizeof{A}<\eps n$, $\sizeof{\inN{v}\cap (V\backslash A)}\geq \sizeof{\inN{v}}-(\sizeof{A}-1)>2r+\eps n - \eps n +1 = 2r + 1 > r$. For any $A$ satisfying $\sizeof{A}\geq \eps n$, $\sizeof{\inN{v}\cap (V\backslash A)}\geq r$ holds by Definition~\ref{close.def}. Then  $A$ is $r$-reachable regardless of its size. $B$ is also $r$-reachable by similar analysis. Therefore $G$ is $r$-robust. The necessity is straightforwardly attained from definitions of $r$-robustness and $r_\beta$-robustness.
\end{proof}

\subsection{Proofs from Section~\ref{sec:analy}}
\label{appdxC}
\begin{proof}[Proof of Lemma~\ref{lemma:nonempty}]
For any $\eps >0$, we have
$\prob{U}{A\cap U = \emptyset} \leq (1-\eps)^t < \frac{\delta}{16}$, 
and 
$\prob{U}{B\cap U = \emptyset} \leq (1-\eps)^t < \frac{\delta}{16}$. 
By a union bound, the probability that $A\cap U$ or $B\cap U$ is empty is less than $\delta/8$.
\end{proof}

\begin{proof}[Proof of Lemma~\ref{lemma:restrict}]
Each vertex $u$ in $U$ is sampled uniformly at random, we let $X_{u,v}$ be the indicator variable for whether or not
$u\in (\inN{v} \cap (A\cup C))$ for a vertex $v$. Then $X_{u,v}$ is a Bernoulli random variable with 
$\prob{}{X_{u,v}=1}=\sizeof{\inN{v}\cap(A\cup C)}/n$ and $\prob{}{X_{u,v}=0}=1-\sizeof{\inN{v}\cap(A\cup C)}/n$. Then $\sizeof{\inN{v}\cap (U_A\cup U_C)}= \sum_{u\in U}X_{u,v}$ is a sum of independent random variables.

We let $t_0\defeq \frac{8}{\eps^2}\ln{\frac{32}{\eps\delta}}$, and $t \geq t_0$. We prove that if a vertex $v$ has more than $(p+\eps/2 ) n$ neighbors in $A\cup C$, the probability that it has less than $(p+\eps/4) t$ neighbors in $U_A\cup U_C$ is small. It follows that
\begin{align*}
&\prob{U}{\frac{\sizeof{\inN{v} \cap (U_A\cup U_C)}}{t} < (p+\eps/4)} \\
&\leq \prob{U}{\frac{\sizeof{\inN{v}\cap (U_A\cup U_C)}}{t} < \frac{\sizeof{\inN{v} \cap (A\cup C)}}{n}- \eps/4}\\
&< \exp(-2(\eps/4)^2 t) \leq \frac{\eps\delta}{32}\,.
\end{align*}
The first inequality is due to the fact that the latter event is a superset of the first one. The second inequality is by an additive Chernoff bound.

Similarly we prove that if a vertex $v$ has less than $pn$ neighbors in $A\cup C$, the probability that it has more than $(p+\eps/4)t$ neighbors in $U_A\cup U_C$ is small.
\begin{align*}
&\prob{U}{\frac{\sizeof{\inN{v} \cap (U_A\cup U_C)}}{t} > (p+\eps/4)}
<\frac{\eps\delta}{32}\,.
\end{align*}

Similar results also hold for the sets $B\cup C$ and $(U_B \cup U_C)$. By a union bound, the probability that any of these events happen for vertex $v$ is less than $\eps\delta/8$.

Let $X$ be a random variable defined as the number of vertices that have bad estimations of their neighborhood\footnote{We call $v$ a vertex with a bad estimation of its neighborhood if any of the $4$ events discussed above happen to $v$.}.
By the linearity of expectation, $\expec{}{X}$ is less or equal to $\eps\delta n/8$. We let $\lambda = 2/\delta$.
Then we attain
\begin{align}
\nonumber
    \prob{}{X\geq \frac{2}{\delta} \cdot \frac{\eps\delta n}{8}} \leq
    \prob{}{X\geq \frac{2}{\delta} \cdot \expec{}{X}}
    \leq \frac{\delta}{2}\,.
\end{align}
The first inequality holds because $\expec{}{X}\leq \eps\delta n/8$; the second inequality is due to Markov's inequality. Therefore the probability that at least $\eps n/4$ vertices have a bad estimation of their neighborhood is less or equal to $\delta/2$. The probability that less than $\eps n/4$ vertices have a bad estimation is greater or equal to $(1-\delta/2)$.

Next we analyze the $\rm{Restrict}$ Algorithm. With probability $(1-\delta)$, for at least $(1-\eps/4)n$ vertices the following inequalities hold:
    $\sizeof{\sizeof{\inN{v}\cap (U_A\cup U_C)}/t - \sizeof{\inN{v}\cap (A\cup C)}/{n}}\leq  \eps/4$.
We call these vertices with good estimations \emph{normal} vertices. 

Since the ordering of vertices for the loop does not affect the outcome of the subroutine, w.l.o.g., we assume that all \emph{normal} vertices are added with high priority. 

From Line 3 and 4 of Algorithm~\ref{alg:Restrict}, we know that all \emph{normal} vertices with more than $(pn+\eps n/2)$ in-neighbors in both $A\cup C$ and $B\cup C$ are added to $C'$. From Line 5 and 6, we observe that no \emph{normal} vertices with more than $pn+\eps n/2$ in-neighbors in $A\cup C$ is added to $B'$. From Line 7 and 8, we attain that no \emph{normal} vertices with more than $pn+\eps n/2$ in-neighbors in $B\cup C$ is added to $A'$. No \emph{normal} vertices will fall in to the case of Line 9 and 10, because otherwise it implies that
$\sizeof{\inN{v}\cap (A\cup C)}\leq (p+\eps/2)n$,
and $\sizeof{\inN{v}\cap (B\cup C)}\leq (p+\eps/2)n$. 
Then we attain $\sizeof{\inN{v}}\leq \sizeof{\inN{v}\cap (A\cup C)} + \sizeof{\inN{v}\cap (B\cup C)} \leq (2p+\eps)n$, which contradicts Assumption~\ref{assum:dmin}.

After we add all the \emph{normal} vertices, the rest of the vertices are added to $A'$, $B'$, and $C'$ arbitrarily.
Therefore, after the execution of the $\mathrm{Restrict}$ algorithm, we let $\Gamma^{(0)}_{A'}\defeq \{v\mid v\in A' , \sizeof{\inN{v}\cap (V\backslash A)}\geq pn+\eps n/2\}$.
 We know that (with probability $1-\delta$), $\sizeof{\Gamma^{(0)}_{A'}}\leq \eps n/4$.  In addition, $\sizeof{A\backslash A'}\leq \eps n/4$, because these vertices are not $\emph{normal}$. Then we attain that $\forall v\notin \Gamma^{(0)}_{A'}$, $\sizeof{\inN{v}\cap(V\backslash A')}<  pn+3\eps n/4$. Therefore $\sizeof{\Gamma_{A'}}\leq \sizeof{\Gamma^{(0)}_{A'}}\leq \eps n/4$. A similar result holds straightforwardly for $B'$ and $\Gamma_{B'}$.
\end{proof}

\begin{proof}[Proof of Lemma~\ref{lemma:move}]
After moving vertices with greater or equal to $pn+3\eps n/4$ in-neighbors in $V\backslash A'$ to the outside of $A'$ in the $\mathrm{Move}$ algorithm, we obtain an $A'$ in which all $v\in A'$ satisfy $\sizeof{\inN{v}\cap (V\backslash A')} < pn+\eps n$. Similarly for $v\in B'$, $\sizeof{\inN{v} \cap (V\backslash B')} < pn+\eps n$.
\end{proof}
\begin{proof}[Proof of Lemma~\ref{lemma:time}]
There are $\exp(\tilde{O}(1/\eps^2))$ partitions of the sampled set $U$ that need to be checked. The running time of \rm{Restrict}, \rm{Move}, and \rm{TestReach} for each partition is $O(m)$.
\end{proof}

\subsection{Algorithm in Practice}
\label{sec:alginP}
We discuss approaches to further improve the efficiency of the algorithm and the quality of the result. 

The first technique is to randomly assign each vertex in $U$ to one of three subsets $U_{A}$, $U_B$, $U_C$. It would take longer for the random partition to hit the correct partition $\pi^*(U)$ that aligns with the true partition $(A,B,C)$, however, by making a few changes to the algorithm we can tolerate some errors in partitioning $U$. Hitting a partition that is close enough to $\pi^*(U)$ with high probability turns out to be more efficient than enumerating all the partitions of $U$, although it does not improve the upper bound $\exp(\tilde{O}(1/\eps^2))m$. We also note that by using random partitions, the algorithm can be readily parallelized.

To use random partitions, we need to make the following changes to the algorithm. (1) we run the for loop (Line 2 to 13) of Algorithm~\ref{alg:Move} over $v\in V$ instead of $v\in (V\backslash U)$. (2) In the for loop (Line 2 to 13) of Algorithm~\ref{alg:Move}, we always check if by moving a vertex between partitions, $A$ or $B$ becomes empty. In that case we do not move the vertex.

The second technique is pruning. When we search for $\pi^*(U)$, we do not utilize the information contained in the subgraph $G[U]$. In fact by considering this information we can rule out some of the partitions of $U$. Pruning does not change the complexity of the algorithm either.

The third technique is also used to deal with the heavy dependency of the running time on $1/\eps$. Even with a constant $\eps$, the number of partitions of $U$ is $\exp(\tilde{O}(1/\eps^2))$, which can be too large from a practical point of view. In practice $t$ can be set to a value according to the computational resource available. However, it must be used with cautions when $t<\frac{8}{\eps^2}\ln{\frac{32}{\eps\delta}}$. In such a case the algorithm loses the guarantee given in the analysis. We note that even without the guarantee, the algorithm is a heuristic that can be used by the attacker to find a weak partition or by the defender to check the robustness by simulating an attacker.

Lastly, we can use heuristics to further improve the results  attained by Algorithm~\ref{alg:testRbst}. Local search can be used to reduce the reachability of $A'$ or $B'$ without increasing the reachability of the other.
Charikar’s greedy algorithm for the densest subgraph problem can also be modified to design a heuristic to post-process the results.
However we do not provide guarantees for 
these techniques.

\subsection{Numerical Examples}
\label{sec:num}
We consider example digraphs with $200$ vertices. We construct the digraphs such that their robustness is know by construction. Then we permutate the labels of the vertices and use the digraphs as input to Algorithm~\ref{alg:testRbst}. The algorithm then finds the hidden partition in which $A$ and $B$ are not $R$-reachable, and returns the smallest $R$ among all the partitions that it reconstructed. 

The digraphs are constructed as follows. We first partition the vertices into $3$ subsets $A$, $B$, and $C$. We specify the sizes of $A$ and $B$. For simplicity we let $\sizeof{A}=\sizeof{B}$, and $\sizeof{C}=200-(\sizeof{A}+\sizeof{B})$.  Each subset forms a complete digraph. Then for each vertex $u$ in $A$ (resp. $B$), we find $\bar{r}$ vertices in $B\cup C$ (resp. $A\cup C$) uniformly at random, an add edges from these vertices to $u$. Finally, we add all edges $(u,v)$ where $u\in (A\cup B)$ and $v$ is in $C$. When $\min\{\sizeof{A},\sizeof{B},\sizeof{C}\}\geq 2\bar{r}$, the constructed digraphs are $\bar{r}$-robust, but not $(\bar{r}+1)$-robust.

We note that sizes of $A$, $B$, and $C$, the number $\bar{r}$, and the ordering of the vertices are assumed to be unknown. In real scenarios one needs to choose an $r$ as part of the input. In our examples, we let $r=\bar{r}+1$ to test the effectiveness of the algorithm.  

To test the proposed algorithm, we let $\eps = 0.15$, which implies $\Delta=\eps n = 30$. Since we construct a digraph such that it is $\bar{r}$-robust but not $(\bar{r}+1)$-robust, the algorithm should return a pair of sets $A'$ and $B'$ such that both sets are not $(r+\Delta)$-robust.

\begin{figure}[htbp]
    \centering
    \includegraphics[width=.4\linewidth]{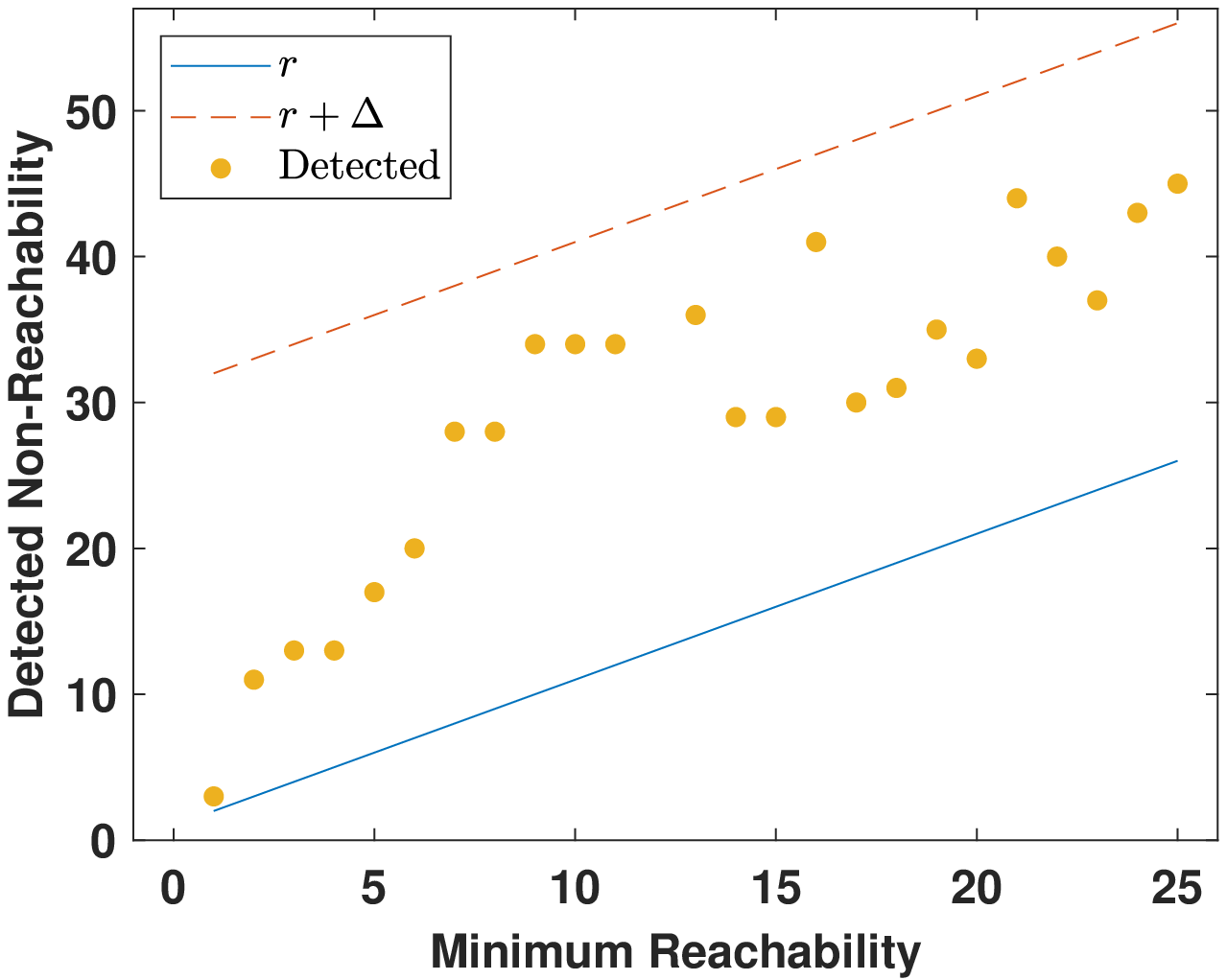}\quad
    \includegraphics[width=.4\linewidth]{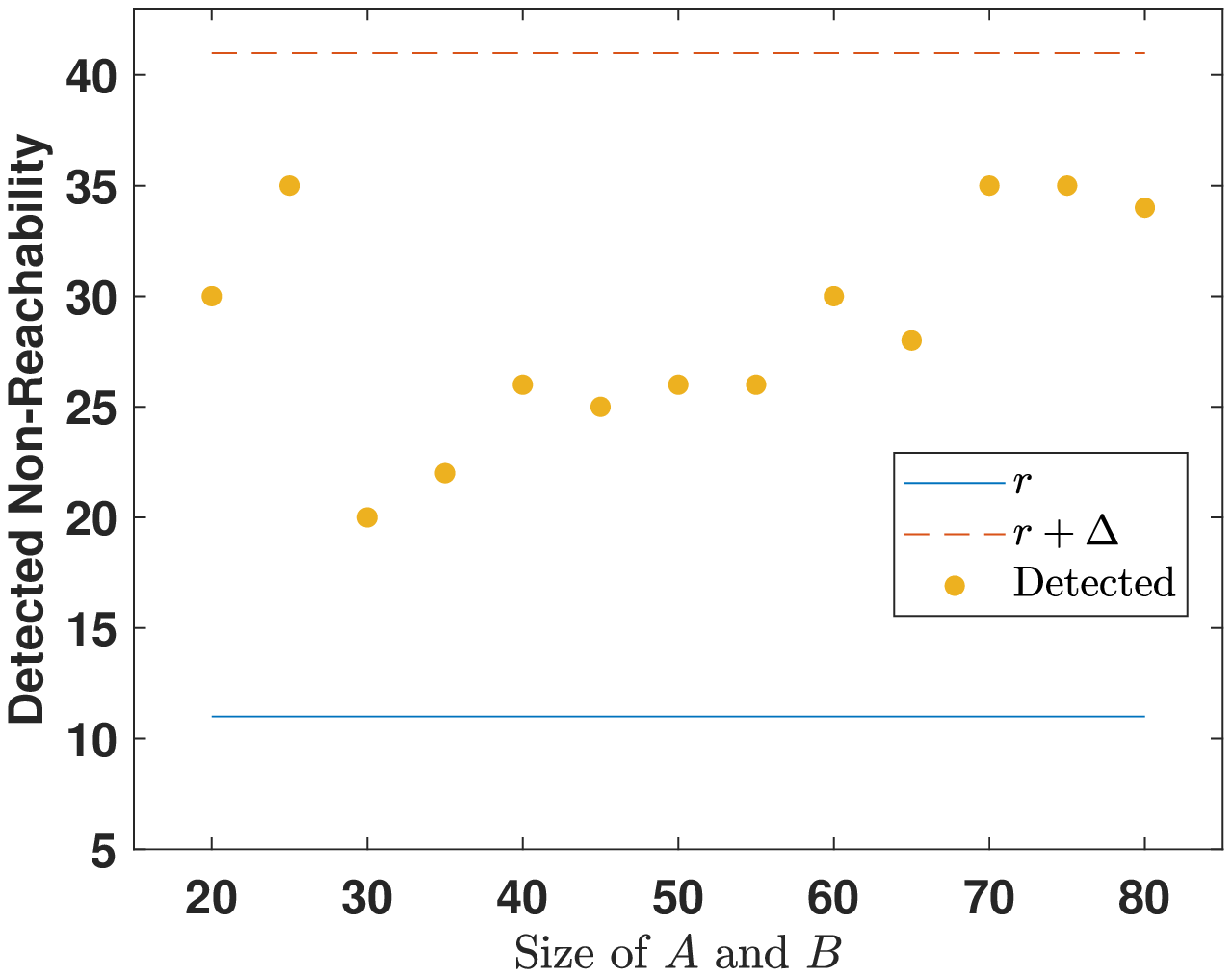}
\caption{Reachability of detected sets. Left: fixed $\sizeof{A}=\sizeof{B}=70$ and various minimum reachability $\bar{r}$; Right: fixed minimum reachablility $\bar{r}=10$ and various $\sizeof{A}=\sizeof{B}$. The solid lines show the optimum values; the dashed lines show the thresholds used by Algorithm~\ref{alg:testReach}; the data points show the values of violation detected by the algorithm.}
    \label{fig:testRbst}
\end{figure}

The algorithm is implemented using Matlab 2021b.  We run the algorithm using a single thread on a laptop computer with an Intel core i5-8365U CPU (1.6 GHz). We choose $t=9$, and run $3$ trails for each instance. We have discussed the issue of practically choosing $t$ in Section~\ref{sec:alginP}.

Fig.~\ref{fig:testRbst} shows the returned $R$ in various settings. In almost all the cases the algorithm finds subsets $A'$ and $B'$, of which none is $R$-reachable, where $R\leq r+\Delta$. The only exception is the case where $\bar{r}=12$ in the first figure. In this failed case the algorithm returns a partition in which both $A'$ and $B'$ are not $82$ reachable.

Most of the instances are rejected within $30$ seconds. The $3$ trails of the failed case take total time of less than $10^3$ seconds to finish. It is of interest to compare with exact testing algorithms given in~\cite{LK13,UP20}. However, since these comparisons are time costly for large networks, we leave them to future work. 
%
%

\end{document}